\newtheorem{prop}{Proposition}[section]
\newtheorem{thm}{Theorem}[section]
\begin{document}
	
	
	
	\title{\bf Design and Structure Dependent Priors for Scale Parameters in Latent Gaussian Models}
	\author{Aldo Gardini\\
		and\\
		Fedele Greco\\
		and\\
		Carlo Trivisano\\
		Department of Statistical Sciences, University of Bologna,\\
		Bologna, 40126, Italy}
	\maketitle
	
	\begin{abstract}
		Many common correlation structures assumed for data can be described through latent Gaussian models. When Bayesian inference is carried out, it is required to set the prior distribution for scale parameters that rules the model components, possibly allowing to incorporate prior information. This task is particularly delicate and many contributions in the literature are devoted to investigating such aspects. We focus on the fact that the scale parameter controls the prior variability of the model component in a complex way since its dispersion is also affected by the correlation structure and the design. To overcome this issue that might confound the prior elicitation step, we propose to let the user specify the marginal prior of a measure of dispersion of the model component, integrating out the scale parameter, the structure and the design. Then, we analytically derive the implied prior for the scale parameter. Results from a simulation study, aimed at showing the behavior of the estimators sampling properties under the proposed prior elicitation strategy, are discussed. Lastly, some real data applications are explored to investigate prior sensitivity and allocation of explained variance among model components.
	\end{abstract}
	
	\noindent%
	{\it Keywords:}  Gaussian Markov Random Fields; Integral equations; Prior elicitation; Quadratic forms

	\section{Introduction}
	Latent Gaussian Models (LGMs) are a subclass of Generalized Linear Mixed Models where the expected value of a response variable $y$ is connected to a linear predictor $\eta$ via a link function $g(\eta)$. The linear predictor 
\begin{equation}\label{eq:linpred}
	\bm\eta=\bm1\beta_0+\mathbf{X}\boldsymbol{\beta}+\sum_{j=1}^Q\boldsymbol{\nu}_j
\end{equation}
is constituted by a priori independent additive components distributed as Gaussian random variables conditionally on model hyperparameters. The design matrix $\mathbf{X}\in\mathbb{R}^{n\times P}$ is associated to fixed effects $\boldsymbol{\beta}=(\beta_1,\ldots,\beta_P)^\top$, $\beta_0$ is an overall intercept and $\boldsymbol{\nu}_j\in\mathbb{R}^{n}$, $j=1,\ldots,Q$, are $Q$ independent vectors of random effects. Without loss of generality, we consider both covariates and random effects to be centered: this is strongly advised when implementing MCMC algorithms \citep{gelfand-sahu-95} and favors a natural interpretation of the prior specification strategy proposed in this paper.

It is convenient to express random components as the product of a random effect design matrix $\textbf{Z}_j\in\mathbb{R}^{n\times m_j}$, with $m_j\leq n$ and a random vector $\boldsymbol{\gamma}_j\in\mathbb{R}^{m_j}$ which in LGMs follows a Gaussian distribution, i.e. $\boldsymbol{\nu}_j=\textbf{Z}_j\boldsymbol{\gamma}_j$. This allows to encompass models for grouped data, where $\mathbf{Z}_j$ is built as a selection matrix, low-rank models and non-parametric regression where $\mathbf{Z}_j$ is built as a basis matrix. Furthermore, when the random effect is observation-specific, $\mathbf{Z}_j$ corresponds to the identity matrix $\mathbf{I}_n$, as in the case of spatial models for areal data.
In this paper, we consider priors on $\boldsymbol{\gamma}_j$ with \textit{fixed} structure matrix $\mathbf{K}_{\gamma_j}$ reflecting the modeler's prior beliefs on the dependence relationships characterizing the $j$-th random effect. 

Table \ref{tab:intro} summarises the whole model architecture of the LGMs covered in this paper. All model parameters follow a Gaussian distribution:
\begin{equation*}
\beta_p|\sigma^2_{\beta_p}\sim\mathcal{N}_1\left(0,\sigma^2_{\beta_p}\right),\ p=1,\ldots,P,\qquad
	\boldsymbol{\gamma}_j|\sigma_{\gamma_j}^2\sim \mathcal{N}_{m_j}\left(\boldsymbol{0},\sigma^{2}_j\mathbf{K}_{\gamma_j}^{-1}\right),\ j=1,\dots,Q,
\end{equation*}
and the choice of prior distributions for  $\boldsymbol{\sigma}^2=(\sigma_{\beta_1}^2,\ldots,\sigma_{\beta_P}^2,\sigma_{\gamma_1}^2,\ldots,\sigma_{\gamma_Q}^2)$ completes model specification. 
Such hyperparameters act as mere scalers, governing the contribution of each model component to the total prior variability of the linear predictor.

\begin{table}[]
\centering
\begin{tabular}{@{}l|ccc@{}}
\toprule
             & Intercept        & Fixed effects                                 & Random effects                                     \\ \midrule
Coefficients & $\beta_0$            & $\beta_i,\ i=1,\dots,P$                       & $\boldsymbol{\gamma}_j,\ j=1,\dots,Q$                     \\
Design       & $\boldsymbol{1}$ & $\mathbf{x}_i$ & $\mathbf{Z}_j$ \\
Structure    & $1$              & $1$                                   & $\mathbf{K}_{\gamma_j}$                  \\
Scaler       & $10^6$   & $\sigma^2_{\beta_i}$ & $\sigma^2_{\gamma_j}$    \\ \bottomrule
\end{tabular}
\caption{Adopted notation for components of an LGM.}
\label{tab:intro}
\end{table}

In this paper, we develop a prior specification strategy that aims to take into account the effect of design matrices, structure matrices and possible linear constraints in propagating the variability induced by the prior on scalers through the linear predictor. For this reason, we dub the proposed priors as Design and Structure Dependent (DSD) priors. Actually, by introducing dependence of the prior on model architecture, we obtain prior statements on the variability of the linear predictor that are not dependent on design and structure, marginally with respect to hyperparameters. The theoretical developments rely on the theory of Quadratic Forms (QF) in Gaussian variables, requiring specific computational tools \citep{gardini2022}. Starting from a Beta distribution of the $2^{nd}$ kind as a base prior, we reduce the prior elicitation problem to the specification of one parameter that calibrates the model components a priori marginal variability, which is strictly tied to the degree of smoothness in several models. One of the merits of such parameter is to allow for intuitive sensitivity analysis, following the same rationale among models that differ with respect to structure and design. 

The rest of the paper is organized as follows. Section \ref{sec:review-pri} offers a brief review of the approaches suggested in the literature on priors for scale parameters in LGM that are relevant to the theory developed in subsequent sections. 
In Section \ref{sec:sdpri}, we introduce our novel  prior specification strategy on scalers of the random effects distributions; the prior specification for the fixed effects, as will be shown subsequently, can be tackled as a special case. 
A simulation study and real data applications are shown in Sections \ref{sec:simulsec} and \ref{sec:application} respectively, while concluding remarks are offered in Section \ref{sec:conclusion}.

	\section{Hyperpriors on Scale Parameters}\label{sec:review-pri}

 As pointed out by \citet{Gelman2006} and \citet{Polson2012}, selecting priors for scale parameters of hierarchical models requires particular attention since such parameters are not directly informed by data. Furthermore, if the scale parameter is set to 0, the model reduces to the one without the random effect (labeled as base model in the literature): \citet{Gelman2006} advises priors with non-null density at 0. The idea of base model is also a cornerstone of the research line on Penalizing-Complexity (PC) priors opened by \citet{Simpson2017}, where non-null density at 0 is suggested in order to favor shrinkage toward the base model.

Besides the distributional assumption presented in Section \ref{sec:B2}, hyperpriors need also care in calibration and, possibly, elicitation of prior knowledge in the model. For these tasks, assumptions on the model components and the available auxiliary information play a crucial role that must be taken into account, as will be discussed in Section \ref{sec:str}.

\subsection{Prior Distribution: the Beta Distribution of the $2^{\text{nd}}$ Kind}\label{sec:B2}

As noted by \citet{Perez2017}, most of the priors proposed for scale parameters are embedded within the Beta distribution of the $2^{\text{nd}}$ kind (B2). The B2 distribution $\text{B2}(b,p,q)$ is a flexible distribution ruled by three parameters, where $b$ is a scale parameter, whereas $p$ and $q$ are shape parameters. 
The density function is:
\begin{equation}
	\label{eq:b2}
	f^{\text{B2}}_{\sigma^2}(s)=\mathcal{K}_{\text{B2}}\;s^{-q-1}\left(1+\frac{b}{s}\right)^{-p-q},
\end{equation}
where $\mathcal{K}_{\text{B2}}=\frac{b^q}{B(p,q)}$ is the normalizing constant and $B(\cdot,\cdot)$ is the beta function. Parameters $p$ and $q$ directly control the behavior of the distribution tails: $p$ determines the behavior in 0 (divergent for $p<1$, finite for $p=1$ and 0 for $p>1$), and $q$ regulates the degree of the polynomial decay of the right tail (the higher $q$, the lighter the tail). A B2 prior on $\sigma^2$ with $p = 0.5$ implies a finite non-null density at 0 for $\sigma$.

Note that the Half-$t$ prior on $\sigma$ with $d$ degrees of freedom and scale $c$ coincides with a $\text{B2}(c^2d,1/2,d/2)$ prior for $\sigma^2$.  Gamma and Inverse Gamma distributions are obtained as limiting cases.

\subsection{The Impact of Structure and Design}\label{sec:str}
An element of criticism in the prior specification emerges whenever a non-diagonal precision matrix is assumed for the random effects distribution, as in this case interpretation of the prior can hardly be disentangled by the structure matrix: Section \ref{sec:distr_V} is devoted to the formalization of such dependence. As an example, in the framework of disease mapping, \citet{bernardinelli1995bayesian} noted that the interpretation of a scale parameter of a spatial random effect is conditional with respect to the neighborhood structure and, therefore, different priors should be assigned to a structured and unstructured component. 
More generally, this consideration can be extended to each case in which an intrinsic Gaussian Markov random field \citep[IGMRF,][]{rue2005gaussian} prior is assumed for a random effect. Indeed, IGMRFs are in general heteroscedastic, being the variance dependent on the structure. For this reason, \citet{Sorbye2014} suggest to specify a scaled hyperprior for $\sigma^2_{\gamma_j}$, accounting for the structure. In particular, they specify a prior on $\sigma^2_{\gamma_j}s^2_{ref,j}$, where $s_{ref,j}$ is the reference standard deviation for the considered IGMRF. Such scaling value is set equal to the geometric mean of the eigenvalues of $\mathbf{K}_{\gamma_j}^-$. The procedure is implemented in the popular INLA software and has been exploited by \citet{Riebler2016} in the context of the popular BYM model \citep{besag1991bayesian}.

In addition, if $\boldsymbol{\nu}_j=\mathbf{Z}_j\boldsymbol{\gamma}_j$, with $\mathbf{Z}_j\neq\mathbf{I}_n$, the design matrix contributes to the prior variability too.
For this reason, \citet{Klein2016} remark that it is not possible to elicit the prior information only considering $\mathbf{K}_{\gamma_j}$ since the dispersion of the whole vector of random effects $\boldsymbol{\nu}_j$ would be ignored. The authors focused on the P-splines regression framework and their elicitation strategy relies on a probabilistic statement about the marginal dispersion of $\boldsymbol{\nu}_j$. As a consequence, the prior on $\sigma_{\gamma_j}^2$ is retrieved integrating out both the random effect variance $\sigma_{\gamma_j}^2\mathbf{K}_{\gamma_j}^{-}$ and the covariate patterns involved in $\mathbf{Z}_j$.
This represents an interesting contribution with respect to the developments proposed in what follows, since the model is calibrated for $\boldsymbol{\nu}_j$, instead of for the vector of sole coefficients $\boldsymbol{\gamma}_j$.

\subsection{Other Approaches: Splitting the Total Variability}\label{sec:split}

Another approach to the problem of specifying the priors and hyperpriors in hierarchical models is based on the idea of splitting the overall model variance among the different components. This can be thought as an extension on the prior on the coefficient of determination $R^2$ for linear models proposed by \citet{gelman_hill_2006}.

We mention two recent interesting proposals within this framework. Narrowing the field to the linear model case, \citet{r2d2} derived a shrinkage prior for the regression coefficient, specifying a marginal prior on the $R^2$ coefficient. An interesting feature of such strategy is that the design is automatically integrated out, deducing the induced priors on the vector of coefficients that are in practice dependent on the design. \citet{yanchenko2021r2d2} extended this procedure to Generalized Linear Mixed Models. With similar spirit, the work by \citet{Fuglstad2020} propose to specify a prior distribution for the total variability and then split it among the distinct model components, accounting for their structure, putting a Dirichlet prior on the splitting nodes.

	\section{Design and Structure Dependent Priors}\label{sec:sdpri}
	
Design and structure dependent priors are introduced by focusing on a random effect $\boldsymbol{\nu}_j=\mathbf{Z}_j\boldsymbol{\gamma}_j$. The specification of the prior on scale parameter is based on its effect on the sampling variance of $\boldsymbol{\nu}_j$, defined as the random variable: 
\begin{equation}
    \label{eq:var_nu}
	V_{\nu_j}=\frac{1}{n-1}\sum_{i=1}^n(\nu_{ij}-\bar{\nu}_j)^2=\frac{\boldsymbol{\nu}_j^\top\mathbf{M}\boldsymbol{\nu}_j}{n-1}=\frac{\boldsymbol{\gamma}_j^\top\mathbf{Z}_j^\top\mathbf{M}\mathbf{Z}_j\boldsymbol{\gamma}_j}{n-1},
\end{equation}
where $\mathbf{M}=\left[\mathbf{I}_n-n^{-1}\boldsymbol{1}_n\boldsymbol{1}_n^T\right]$ is the centering matrix. This is a QF in Gaussian variables, whose distribution depends on both the design matrix $\mathbf{Z}_j$ and the structure matrix $\mathbf{K}_{\gamma_j}$ (for further details, see Section \ref{app:QF} in the Appendix). Note that the structure matrix induced on $\boldsymbol{\nu}_j$ is $\mathbf{K}_{\nu_j}=\left(\mathbf{Z}_j\mathbf{K}_{\gamma_j}^{-1}\mathbf{Z}_j^\top\right)^{-}$.

The prior distribution on $\sigma^2_{\gamma_j}$ is specified in order to control the marginal distribution of $V_{\nu_j}$, i.e. the prior is retrieved integrating out the effect of the structure $\mathbf{K}_{\gamma_j}$ and the design $\mathbf{Z}_{j}$. In this way, we make the prior on $\sigma_{\gamma_j}^2$  dependent in design and structure, with the aim of making the marginal prior on $V_{\nu_j}$ independent on both of them. 

In Section \ref{sec:distr_V}, the conditional and marginal distributions of $V_{\nu_j}$ are studied. In Section \ref{sec:deriv}, the DSD prior for the case of full-rank precision matrices $\mathbf{K}_{{\gamma}_j}\succ 0$ is derived, whereas the important case of semi-positive definite matrix $\mathbf{K}_{{\gamma}_j}$, which includes the class of IGMRF priors, is considered in Section \ref{sec:IGMRF}. Section \ref{sec:elicit} is devoted to prior elicitation.


\subsection{Conditional and Marginal Priors on $V_{{\nu}_j}$}\label{sec:distr_V}

The distribution of $V_{{\nu}_j}$ conditional on $\sigma^2_{\gamma_j}$ can be expressed as a linear combination of independent chi-square random variables \citep{box1954,ruben1962probability},
\begin{equation}
	\label{eq:qf}
	V_{{\nu}_j}|\sigma^2_{\gamma_j}\sim \frac{\sigma_{\gamma_j}^2}{n-1}\sum_{k=1}^{n}\lambda_{kj}X_k;\ \ X_k\stackrel{ind}{\sim}  \chi^2_1,\ k=1,\dots,n,
\end{equation}
where $\boldsymbol{\lambda}_j=(\lambda_{1j},\dots, \lambda_{nj})^\top=\text{eig}\left(\mathbf{M}\mathbf{K}_{\nu_j}^{-1}\right)$. Expectation and variance are:
\begin{equation*}
	\mathbb{E}\left[V_{{\nu}_j}|\sigma^2_{\gamma_j}\right]=\frac{\sigma_{\gamma_j}^2}{n-1}\sum_{k=1}^{n}\lambda_{kj} \text{ and } \mathbb{V}\left[V_{{\nu}_j}|\sigma^2_{\gamma_j}\right]=\left(\frac{\sigma_{\gamma_j}^2}{n-1}\right)^2\sum_{k=1}^{n}2\lambda_{kj}^2.
\end{equation*}
We remark that at least one of the eigenvalues is null because of the multiplication by the rank-deficient matrix $\mathbf{M}$. Equation \eqref{eq:qf} reveals that the dependence of the distribution of $V_{\nu_j}|\sigma^2_{\gamma_j}$ on the structure matrix $\mathbf{K}_{{\nu}_j}=\left(\mathbf{Z}_j\mathbf{K}_{\gamma_j}^{-1}\mathbf{Z}_j^\top\right)^{-}$ is captured by the eigenvalues $\boldsymbol{\lambda}_j$: different structure and design matrices lead to different conditional moments of $V_{\nu_j}$. For this reason, a naive prior specification that chooses the same prior distribution for each $\sigma^2_{\gamma_j}$ would imply different marginal distributions of the sampling variances $V_{\nu_j}$, i.e. different contributions of each random component to the a priori sampling variance of the linear predictor.

Following analogous arguments, \cite{Sorbye2014} propose to scale the structure matrices by the geometric mean of their eigenvalues: a similar approach would consist in dividing $\mathbf{K}_{\nu_j}$ by $\frac{\sum_{k=1}^{n}\lambda_{kj}}{n-1}$. This would remove dependence of $\mathbb{E}\left[V_{\nu_j}|\sigma^2_{\gamma_j}\right]$ on $\mathbf{K}_{{\nu}_j}$, obtaining that
$
	\mathbb{E}\left[V_{{\nu}_j}|\sigma^2_{\gamma_j}\right]=\sigma_{\gamma_j}^2,\ \forall j.
$
Nonetheless, scaling $\mathbf{K}_{\nu_j}$ by any constant would not completely remove the dependence of the marginal density  $f_{V_{{\nu}_j}}$ on $\mathbf{K}_{{\nu}_j}$. In other words, if the same prior $f_{\sigma^2}$ is selected for each $\sigma^2_{\gamma_j}$, the marginal distributions
\begin{equation}\label{eq:margV}
	f_{V_{{\nu}_j}}(v)=\int_{0}^{+\infty}f_{V_{{\nu}_j}|\sigma^2_{\gamma_j}}(v,s)f_{\sigma^2}(s)\mathrm{d}s,\quad j=1,\ldots,Q,
\end{equation}
 are all different because $f_{V_{{\nu}_j}|\sigma^2_{\gamma_j}}(\cdot)\neq f_{V_{{\nu}_l}|\sigma^2_{\gamma_l}}(\cdot)$ for each $j\neq l$. In summary:
\begin{equation*}
	f_{\sigma^2_{\gamma_j}}\equiv f_{\sigma^2},\ \forall j \implies f_{V_{{\nu}_j}}\not\equiv f_{V_{{\nu}_l}}, \ \forall l\neq j;
\end{equation*}
confirming that equal priors on scale parameters lead to different priors on random effects sampling variances. 

We stress that assigning different prior importance to each random component is perfectly sensible if it reflects the modeler's prior beliefs: this can be beneficial and even desirable in Bayesian applications. Several efforts have been made in the literature to build intuitive and flexible prior specification strategies that allow to manage the contribution of each random component to the total variability, a recent prominent example being \cite{Fuglstad2020}. 

\subsection{Derivation of the DSD Prior}\label{sec:deriv}

As a starting point, we focus on the effect of the structure matrix by considering a single random effect $\boldsymbol\nu|\sigma^{2}\sim\mathcal{N}_n\left(\mathbf{0},\sigma^{2}\mathbf{K}_{\nu}^{-1}\right)$, with $\mathbf{Z}=\mathbf{I}_n$ and $\mathbf{K}_\nu=\mathbf{K}_\gamma\succ 0$. The structure matrix $\mathbf{K}_{\nu}$ encodes the conditional relationships between elements of $\boldsymbol\nu$ and it depends on the application at hand.

A notable special case, that plays a prominent role in the proposed prior specification strategy, is represented by i.i.d. random effects obtained by setting $\mathbf{K}_{{\nu}}=\mathbf{I}_n$. This is an ideal benchmark for managing prior specification on scale parameters because of its simplicity and interpretability: in the i.i.d. model, independently on $n$, $\sigma^2$ corresponds to both the expected sampling variance of $\boldsymbol\nu$ and to the variance of each component $\nu_i$. The sampling variance of an i.i.d. random effect, denoted as $\mathcal{V}$ in what follows, is distributed as a scaled chi-square with $n-1$ degrees of freedom conditionally on $\sigma^2$. Equivalently, posing $\alpha=\frac{n-1}{2}$ and $\beta=\frac{n-1}{2}$, one obtains $\mathcal{V}|\sigma^2\sim\text{Gamma}\left(\alpha,\frac{\beta}{\sigma^2}\right),$
with $\mathbb{E}\left[\mathcal{V}|\sigma^2\right]=\mathbb{V}\left[\nu_i|\sigma^2\right]=\sigma^2$.
Given the generality of the distribution, we opt for a B2 prior (introduced in Section \ref{sec:B2}) on $\sigma^2$ in the i.i.d. case. Consequently, the marginal distribution of $\mathcal{V}$ is a mixture of a gamma and a B2 distributions whose density function is given in Proposition \ref{prop:2f0}. Since this marginal density can be expressed in terms of the $_2F_0$ hypergeometric function \citep{NIST}, we dub it $_2\mathcal{F}_0$-distribution.

\begin{prop}[$_2\mathcal{F}_0$-distribution]\label{prop:2f0}
Let $X$ and $Y$ be two random variables such that
\begin{equation*}
	X|y\sim \mathrm{Gamma}\left(\alpha,\frac{\beta}{y}\right),\quad Y\sim \mathrm{B2}(b,p,q).
\end{equation*}
Then, $X\sim{_2}\mathcal{F}_0(\alpha,\beta/b,p,q)$ where $\alpha$, $p$ and $q$ are shape parameters and the scale parameter $\beta/b$ is the ratio of the scale parameters of the mixed distributions. 
The density of $X$ is:
\begin{align}\label{eq:MG_f}
	&f_X(x)=\left(\frac{b}{\beta}\right)^q\frac{\Gamma\left(1+\alpha-p\right)}{ \Gamma(p)B(\alpha,q)}x^{-q-1}\ _2F_0\left(\alpha+q,p+q;-;-\frac{b}{x\beta}\right).
\end{align}

\end{prop}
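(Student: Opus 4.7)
The plan is to compute the marginal density by integrating $Y$ out of the joint law, i.e.\ $f_X(x)=\int_0^{\infty} f_{X\mid Y}(x\mid y)\,f_Y(y)\,dy$, and then to reshape the resulting Laplace-type integral into the formal power series that defines $_2F_0$. After plugging in the Gamma kernel $f_{X\mid Y}(x\mid y)=\frac{1}{\Gamma(\alpha)}(\beta/y)^{\alpha}x^{\alpha-1}e^{-\beta x/y}$ and the $\mathrm{B2}(b,p,q)$ density from \eqref{eq:b2}, the integrand has two non-trivial pieces in $y$: the exponential $e^{-\beta x/y}$ and the rational term $(1+b/y)^{-p-q}$. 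The natural change of variable to align these with standard special-function representations is $t=1/y$, which converts the mixed expression into
\begin{equation*}
f_X(x)\;\propto\;x^{\alpha-1}\int_0^{\infty} t^{\alpha+q-1}\,e^{-\beta x t}\,(1+bt)^{-p-q}\,dt .
\end{equation*}

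From here, two equivalent routes lead to the stated form. Route one: rescale $u=\beta x t$ and match the integral to the Tricomi integral representation $U(a,c,z)=\frac{1}{\Gamma(a)}\int_0^{\infty} e^{-zt}t^{a-1}(1+t)^{c-a-1}dt$ with $a=\alpha+q$, $c=\alpha-p+1$ and $z=\beta x/b$, then apply the asymptotic identity $U(a,c,z)=z^{-a}\,{}_2F_0(a,\,a-c+1;-;-1/z)$ to surface the ${}_2F_0(\alpha+q,\,p+q;-;-b/(\beta x))$ factor. Route two (more self-contained): expand $(1+bt)^{-p-q}$ as a binomial series, integrate term-by-term against $t^{\alpha+q-1}e^{-\beta x t}$ using $\int_0^{\infty}t^{\alpha+q+k-1}e^{-\beta x t}dt=\Gamma(\alpha+q+k)(\beta x)^{-(\alpha+q+k)}$, and collect the Pochhammer symbols to recognise the ${}_2F_0$ series directly. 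The two routes agree: the second provides a concrete derivation, the first makes explicit the analytic meaning of the resulting divergent series.

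The final step is constant bookkeeping: combining the prefactor $\beta^{\alpha}b^{q}/[\Gamma(\alpha)B(p,q)]$ with the powers of $\beta x$ produced by the rescaling of the integral isolates the factors $(b/\beta)^{q}$ and $x^{-q-1}$ multiplying the hypergeometric function, while the Gamma factors assemble into the stated normalizing constant (one can verify the equivalent forms via $B(\alpha,q)B(p,q)$ identities and the Kummer relation $U(a,c,z)=z^{1-c}U(a-c+1,2-c,z)$, which reflects the symmetry of ${}_2F_0$ in its two upper parameters). The main subtlety, rather than any genuine technical obstacle, is that ${}_2F_0$ is a divergent series understood only asymptotically: the termwise interchange must therefore be justified as the Poincar\'e asymptotic expansion of the Tricomi function, and the density equality should be stated as an identity between the convergent Laplace integral and its formal series representative, which is the sense in which the ${}_2\mathcal{F}_0$ distribution is defined.
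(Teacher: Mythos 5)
Your route is genuinely different from the paper's. The paper treats the mixture as a Mellin convolution: it multiplies the Mellin transforms of the Gamma and B2 densities and inverts the product as a Mellin--Barnes integral, identified with Kummer's $U$ function. You instead evaluate the mixture integral directly: after $t=1/y$ you get the Laplace-type integral $\int_0^{\infty}t^{\alpha+q-1}e^{-\beta x t}(1+bt)^{-p-q}\mathrm{d}t$, match it to the Tricomi representation of $U\left(\alpha+q,\alpha-p+1,\beta x/b\right)$ (note that the substitution that literally produces the $(1+u)$ kernel is $u=bt$, not $u=\beta x t$; the parameters you report are nevertheless correct), and pass to ${}_2F_0$ by the standard asymptotic identity; the binomial-series route is the same computation done termwise, and you are right that it has only Poincar\'e-asymptotic meaning, with $U$ as the rigorous representative. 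This is more elementary and self-contained than the Mellin machinery and reaches the same hypergeometric kernel $x^{-q-1}\,{}_2F_0\left(\alpha+q,p+q;-;-b/(x\beta)\right)$.

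The gap is exactly the step you wave through, the ``constant bookkeeping''. Carrying it out along your own route gives
\begin{align*}
	f_X(x)&=\frac{\beta^{\alpha}x^{\alpha-1}b^{q}}{\Gamma(\alpha)B(p,q)}\int_0^{\infty}t^{\alpha+q-1}e^{-\beta x t}(1+bt)^{-p-q}\,\mathrm{d}t\\
	&=\left(\frac{b}{\beta}\right)^{q}\frac{\Gamma(\alpha+q)}{\Gamma(\alpha)B(p,q)}\;x^{-q-1}\;{}_2F_0\left(\alpha+q,p+q;-;-\frac{b}{x\beta}\right),
\end{align*}
since the Laplace integral equals $\Gamma(\alpha+q)(\beta x)^{-\alpha-q}{}_2F_0(\cdot)$. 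The constant $\left(\frac{b}{\beta}\right)^{q}\frac{\Gamma(\alpha+q)}{\Gamma(\alpha)B(p,q)}=\left(\frac{b}{\beta}\right)^{q}\frac{\Gamma(p+q)}{\Gamma(p)B(\alpha,q)}$ is \emph{not} the one printed in \eqref{eq:MG_f}; it agrees with it essentially only when $p+q=1+\alpha-p$. So your assertion that the Gamma factors ``assemble into the stated normalizing constant'' is false as written: what your derivation actually establishes is \eqref{eq:MG_f} with $\Gamma(p+q)$ in place of $\Gamma(1+\alpha-p)$. The same slip appears in the final display of the paper's own proof and traces to the prefactor $\Gamma(a)\Gamma(a-b+1)$ (here $\Gamma(\alpha+q)\Gamma(p+q)$) of the Mellin--Barnes representation of $U$. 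A quick check confirms your version: for $\alpha=3$, $p=q=b=\beta=1$ the mixture density is $3x^{2}U(4,3,x)$, which integrates to one, while the printed formula gives $6x^{2}U(4,3,x)$, which integrates to two. Finish the bookkeeping explicitly and state the result with the corrected constant, flagging the discrepancy with the printed statement, rather than claiming agreement with it.
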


\begin{proof}
The problem of proving this result can be tackled from different perspectives. Here, the proposed way relies on the inversion of a Mellin transform. The starting point of the proof is standard:
\begin{equation*}
	f_X(x)=\int_0^{+\infty}f_{X|Y}(x,y)f_Y(y)\mathrm{d}y.
\end{equation*}
Knowing that $f_{X|Y}(x,y)=y^{-1}f_{X^*|Y}(x/y)$, where $X^*=y^{-1}X$:
\begin{equation}\label{eq:mult}
	f_X(x)=\int_0^{+\infty}y^{-1}f_{X^*|Y}(x/y)f_Y(y)\mathrm{d}y,
\end{equation}
where $f_{X^*|Y}(x/y)$ is the density function of a gamma distribution with parameters $(\alpha, \beta)$ evaluated in $x/y$. Given the form of the integral, relationship \eqref{eq:mellin_rel} can be exploited, after the Mellin transform of the Gamma distribution and the B2 distribution are retrieved, with the related strips of analyticity:
\begin{align}
	&\widehat{f}_{X^*|Y}(z)=\left(\frac{1}{\beta}\right)^{z-1}\frac{\Gamma\left(\alpha+z-1\right)}{\Gamma(\alpha)},\quad \Re(z)>1-\alpha;\label{eq:mellin_gamma}\\
	&\widehat{f}_{Y}(z)=b^{z-1}\frac{\Gamma(p+z-1)\Gamma(q-z+1)}{\Gamma(p)\Gamma(q)},\quad 1-p<\Re(z)<1+q.\label{eq:mellin_B2}
\end{align}
Plugging these expressions into formula \eqref{eq:mellin_rel}, leads us to the Mellin transform  $\widehat{f}_X(z)$, that has the following strip of analyticity:
$$
\max\left(1-p,1-\alpha\right)<\Re(z)<1+q,
$$
obtained intersecting those of \eqref{eq:mellin_gamma} and \eqref{eq:mellin_B2}.
Lastly, the density function $f_X(x)$ is obtained by inverting the Mellin transform, exploiting the generic formula \eqref{eq:mellin_inv}:
\begin{align*}
	f_X(x)&=\frac{1}{2\pi i}\int_{h-i\infty}^{h+i\infty} x^{-z}\left(\frac{b}{\beta}\right)^{z-1}\frac{\Gamma\left(\alpha+z-1\right)\Gamma(p+z-1)\Gamma(q-z+1)}{\Gamma(\alpha)\Gamma(p)\Gamma(q)}\mathrm{d}z\\
	&=\frac{\beta}{2b \Gamma(\alpha)\Gamma(p)\Gamma(q)\pi i}\int_{h-i\infty}^{h+i\infty}\left(\frac{b}{x\beta}\right)^{z}\Gamma\left(\alpha+z-1\right)\Gamma(p+z-1)\Gamma(q-z+1)\mathrm{d}z\\
	&=\frac{\beta\Gamma\left(\alpha+q\right)\Gamma\left(1+\alpha-p\right)}{b \Gamma(\alpha)\Gamma(p)\Gamma(q)}\left(\frac{x\beta}{b}\right)^{\alpha-1}U\left(\alpha+q,1+\alpha-p,\frac{x\beta}{b}\right),
\end{align*}
where the last step directly follows from the integral representation of Kummer's $U$ hypergeometric function in terms of Mellin-Barnes integrals \citep[][equation 13.4.17]{NIST}.
\end{proof}
	
Thus, for the i.i.d. model, the marginal distribution of $\mathcal{V}$ is
\begin{equation}
	\label{eq:margiid}
	\mathcal{V}\sim{_2}\mathcal{F}_0(\alpha,\beta/b,p,q).
\end{equation}
However, if $\mathbf{K}_\nu\neq\mathbf{I}_n$, the sampling variance of the random effect conditioned on $\sigma^2$ is a linear combination of chi-squared random variables (see equation \eqref{eq:qf}), hence a $\text{B2}(b,p,q)$ prior on $\sigma^2$ would result in a different marginal distribution of $V_\nu$, which depends on the application at hand: we argue that this dependence is undesirable since it impedes coherence of prior statements among different models. To remove such dependence, we aim to obtain a DSD prior on $\sigma^2$, denoted as $f_{\sigma^2}^{DSD}$, that, when mixed with the distribution of $V_\nu|\sigma^2$, delivers the density of the benchmark $_2\mathcal{F}_0(\alpha,\beta/b,p,q)$ distribution. Technically, we seek the prior density which solves the integral equation:
\begin{equation*}
	f_{\mathcal{V}}(v)=\int_{0}^{+\infty}f_{V|\sigma^2}(v,s)f^{DSD}_{\sigma^2}(s)\mathrm{d}s.
\end{equation*}
The merit of such distribution would be to have the same interpretation, independently on $\mathbf{K}_{\nu}$, in terms of marginal variance, which will always coincide with that of the i.i.d. model.

For the sake of simplicity, we develop DSD priors working on the following approximation of $V|\sigma^2$:
\begin{equation}\label{eq:approx_qf}
	V|\sigma^2\stackrel{a}{\sim} \text{Gamma}\left(\tilde{\alpha},\frac{\tilde{\beta}}{\sigma^2}\right),
\end{equation}
where $\stackrel{a}{\sim}$ indicates a random variable \textit{approximately distributed as}. This approximation has been proposed by \cite{box1954} in order to match the first two moments of $V|\sigma^2$, and parameters are
\begin{equation}\label{eq:tildes}
\tilde{\alpha}=\frac{\left(\sum_{i=1}^{n}\lambda_i\right)^2}{2\sum_{i=1}^{n}\lambda^2_i},\qquad\tilde{\beta}=\frac{n-1}{2}\frac{\sum_{i=1}^{n}\lambda_i}{\sum_{i=1}^{n}\lambda^2_i}.
\end{equation}
Since $\tilde{\alpha}$ contains the ratio of the square of a sum over a sum of squares, $\tilde{\alpha}\geq 1/2$. Given that  $V|\sigma^2$ approximately follows a Gamma distribution, Proposition \ref{prop:2f0} implies that, if $\sigma^2\sim\text{B2}(b,p,q)$, the approximate marginal distribution of $V$ is
\begin{equation}
	\label{eq:margVtil}
	V \stackrel{a}{\sim}{_2}\mathcal{F}_0(\tilde{\alpha},\tilde{\beta}/b,p,q).
\end{equation}
Comparison between \eqref{eq:margiid} and \eqref{eq:margVtil} clearly highlights that the difference between these marginal distributions is due to the eigenvalues of $\mathbf{MK}_\nu^{-1}$. The following theorem constitute the main result of the paper, stating the density of the DSD prior, i.e. a prior that delivers the marginal density \eqref{eq:margiid} independently on $\mathbf{K}_\nu$. The prior density turns out to be expressed in terms of the $_2F_1$ hypergeometric function \citep{NIST}. 

\begin{thm}[DSD prior] \label{thm:balpri}
Let $\boldsymbol{\nu}|\sigma^2\sim\mathcal{N}_n\left(\mathbf{0},\sigma^{2}\mathbf{K}^{-1}_\nu\right)$. The DSD prior $f_{\sigma^2}^{DSD}$, i.e. the prior that solves the integral equation
\begin{equation}\label{eq:inteqth}
	\int_{0}^{+\infty}f_{V|\sigma^2}(v,s)f^{DSD}_{\sigma^2}(s)\mathrm{d}s=f_\mathcal{V}(v),
\end{equation}
where $\mathcal{V}\sim{_2}\mathcal{F}_0(\alpha,\beta/b,p,q)$ and $V|\sigma^2\stackrel{a}{\sim} \text{Gamma}\left(\tilde{\alpha},\frac{\tilde{\beta}}{\sigma^2}\right)$, has density
\begin{equation}\label{eq:dens_balpri}
	f^{DSD}_{\sigma^2}(s)=\mathcal{K}_{\text{DSD}}\;s^{-q-1} {_2F_1}\left(q+\alpha,q+p;q+\tilde{\alpha};-\frac{b\tilde{\beta}}{s\beta}\right),
\end{equation}
with
$$\mathcal{K}_{\text{DSD}}=\left(\frac{b\tilde{\beta}}{\beta}\right)^q\frac{1}{B(p,q)}\frac{\Gamma(\tilde{\alpha})}{\Gamma(q+\tilde{\alpha})}\frac{\Gamma(q+\alpha)}{\Gamma(\alpha)},$$
provided that $p\leq\tilde{\alpha}$.
\end{thm}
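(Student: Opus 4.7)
My plan is to mirror the strategy used in the proof of Proposition \ref{prop:2f0}, exploiting that the integral in \eqref{eq:inteqth} is a multiplicative (Mellin) convolution. Writing $f_{V|\sigma^2}(v,s)=s^{-1}g(v/s)$ where $g$ is the density of a $\mathrm{Gamma}(\tilde\alpha,\tilde\beta)$ variable, the integral equation becomes a Mellin convolution, so that the Mellin transform of $f^{DSD}_{\sigma^2}$ equals the Mellin transform of $f_\mathcal{V}$ divided by $\widehat g$. The Mellin transform of $g$ is already given by \eqref{eq:mellin_gamma} with $(\alpha,\beta)$ replaced by $(\tilde\alpha,\tilde\beta)$, and by Proposition \ref{prop:2f0} the Mellin transform of $f_\mathcal{V}$ is the product \eqref{eq:mellin_gamma}$\cdot$\eqref{eq:mellin_B2} evaluated at $(\alpha,\beta,b,p,q)$. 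Dividing then yields
\begin{equation*}
\widehat{f}^{DSD}_{\sigma^2}(z)=\left(\frac{b\tilde\beta}{\beta}\right)^{z-1}\frac{\Gamma(\tilde\alpha)}{\Gamma(\alpha)\Gamma(p)\Gamma(q)}\cdot\frac{\Gamma(\alpha+z-1)\,\Gamma(p+z-1)\,\Gamma(q-z+1)}{\Gamma(\tilde\alpha+z-1)},
\end{equation*}
whose strip of analyticity is $\max(1-p,1-\alpha)<\Re(z)<1+q$.

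Next I would invert this Mellin transform. The plan is to recognize the right-hand side as essentially the Mellin-Barnes representation of a Gauss hypergeometric function $_2F_1$. Concretely, I shift $z\mapsto z-q$ (i.e.\ pull out a factor $s^{-q-1}$) and then rewrite the remaining contour integral in the standard form
\begin{equation*}
{_2F_1}(a,b;c;-x)=\frac{\Gamma(c)}{\Gamma(a)\Gamma(b)}\cdot\frac{1}{2\pi i}\int_L\frac{\Gamma(a+t)\Gamma(b+t)\Gamma(-t)}{\Gamma(c+t)}x^{t}\,\mathrm{d}t,
\end{equation*}
with the identifications $a=q+\alpha$, $b=q+p$, $c=q+\tilde\alpha$ and $x=b\tilde\beta/(s\beta)$. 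Matching prefactors delivers the claimed density \eqref{eq:dens_balpri} together with the normalizing constant $\mathcal{K}_{\text{DSD}}$; I would double-check the constant by collecting the Gamma-function factors and using $B(p,q)=\Gamma(p)\Gamma(q)/\Gamma(p+q)$, which should absorb the residual $\Gamma(q+p)$ coming from the $_2F_1$ representation.

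The main technical obstacle is bookkeeping the convergence of the contour integral and justifying the use of the Mellin-Barnes representation for $_2F_1$, which is where the assumption $p\leq\tilde\alpha$ enters. Specifically, the contour must separate the ascending pole sequence of $\Gamma(p+t')$, $\Gamma(\alpha+t')$ from the descending sequence of $\Gamma(-t')$, and moreover the poles of $\Gamma(p+z-1)$ that would otherwise contribute to the inverse transform must be cancelled by zeros of $1/\Gamma(\tilde\alpha+z-1)$; this requires the sequence $\{1-p-n\}_{n\ge0}$ to be contained in the pole set $\{1-\tilde\alpha-m\}_{m\ge0}$ of the denominator, which is exactly $\tilde\alpha-p\in\mathbb{Z}_{\ge0}$ or, in continuous form, the dominance condition $p\leq\tilde\alpha$ that guarantees the resulting $_2F_1$ is well-defined and the density is non-negative and integrable. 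Once this is in place, uniqueness of the Mellin transform on its strip of analyticity closes the argument.
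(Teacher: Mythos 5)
Your derivation follows exactly the paper's route: rewrite \eqref{eq:inteqth} as a multiplicative (Mellin) convolution, divide the Mellin transform of $f_\mathcal{V}$ by that of the $\mathrm{Gamma}(\tilde\alpha,\tilde\beta)$ kernel, and invert via the Mellin--Barnes representation of ${}_2F_1$; the transform you obtain and the bookkeeping of the prefactor (with $B(p,q)$ absorbing the residual $\Gamma(q+p)$) are correct.

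The genuine gap is in your account of where the hypothesis $p\leq\tilde\alpha$ enters. The Mellin--Barnes integral for ${}_2F_1(a,b;c;\cdot)$ does not require $c\geq b$: it converges and defines the function for generic parameters, and no cancellation of the poles of $\Gamma(p+z-1)$ by the zeros of $1/\Gamma(\tilde\alpha+z-1)$ is needed. Your condition $\tilde\alpha-p\in\mathbb{Z}_{\geq 0}$ is therefore not a requirement at all, and it is in any case strictly stronger than the stated hypothesis. The inverse Mellin transform --- hence a solution of the Fredholm equation --- exists for any admissible $\tilde\alpha$; what is \emph{not} automatic is that this solution is a probability density, and that is the only place the hypothesis is used. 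Positivity is obtained from Euler's integral representation
\begin{equation*}
{}_2F_1\left(a,b;c;x\right)=\frac{\Gamma(c)}{\Gamma(b)\Gamma(c-b)}\int_0^1 t^{b-1}(1-t)^{c-b-1}(1-tx)^{-a}\,\mathrm{d}t,\qquad \Re(c)>\Re(b)>0,
\end{equation*}
applied with $b=q+p$, $c=q+\tilde\alpha$ and argument $x=-b\tilde\beta/(s\beta)<0$: the integrand is manifestly positive precisely when $c-b=\tilde\alpha-p>0$, which is the dominance condition in the statement. You should also confirm that the solution integrates to one --- either by noting that $\widehat f^{DSD}_{\sigma^2}(1)=\widehat f_\mathcal{V}(1)/\widehat f_{V^*|\sigma^2}(1)=1$, or, as the paper does, via result 7.511 of Gradshteyn and Ryzhik. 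With the pole-cancellation claim removed and the positivity and normalization checks supplied, your argument coincides with the paper's proof.
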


\begin{proof}
Equation \eqref{eq:inteqth} represents a Fredholm integral equation of the first kind \citep[Chapter~10]{polyanin2008handbook}. Defining the random variable $V^*=s^{-1}V$ one obtains:
\begin{equation*}
	\int_{0}^{+\infty}s^{-1}f_{V^*|\sigma^2}(v/s)f^{DSD}_{\sigma^2}(s)\mathrm{d}s=f_\mathcal{V}(v),
\end{equation*}
where $f_{V^*|\sigma^2}(v/s)$ is the density function of a gamma random variable with parameters $(\tilde{\alpha}, \tilde{\beta})$ evaluated at $v/s$. In the framework of integral equations, the latter function is also called kernel: since it is expressed as a function of the variables $(v,s)$ through their ratio, the solution of the equation can be retrieved using the Mellin transform tool (see Section \ref{sec:mellin} in the Appendix). Due to relation \eqref{eq:mellin_rel}, the following equality holds:
\begin{equation}
	\widehat{f}_\mathcal{V}(z)=\widehat{f}_{V^*|Y}(z)\widehat{f}^{DSD}_{\sigma^2}(z),
\end{equation}
where $\widehat{f}^{DSD}_{\sigma^2}(z)$ is the Mellin transform of the structure dependent prior. Recalling that $\mathcal{V}\sim {_2}\mathcal{F}_0\left(\alpha, \beta/b,p,q\right)$ and $V|\sigma^2\sim \text{Gamma}(\tilde{\alpha},\tilde{\beta})$, their Mellin transforms can be recovered from equations \eqref{eq:mellin_gamma} and \eqref{eq:mellin_B2} reported for the proof of Proposition \ref{prop:2f0}, obtaining:
\begin{equation*}
	\widehat{f}^{DSD}_{\sigma^2}(z)=\left(\frac{b\tilde{\beta}}{\beta}\right)^{z-1}\frac{\Gamma(\tilde{\alpha})}{\Gamma(\alpha)\Gamma(p)\Gamma(q)}\cdot\frac{\Gamma\left(\alpha+z-1\right)\Gamma(p+z-1)\Gamma(q-z+1)}{\Gamma(\tilde{\alpha}+z-1)},
\end{equation*}
with strip of analyticity $\max\left(1-p,1-\alpha,1-\tilde{\alpha}\right)<\Re(z)<1+q$. Hence, $f_{\sigma^2}^{DSD}$ can be retrieved inverting the Mellin transform:
\begin{equation*}
	f_{\sigma^2}^{DSD}(s)=\frac{1}{2\pi i}\cdot \frac{\beta\Gamma(\tilde{\alpha})}{b\tilde{\beta}\Gamma(\alpha)\Gamma(p)\Gamma(q)}\int_{h-i\infty}^{h+i\infty} \left(\frac{b\tilde{\beta}}{\beta s}\right)^{z}\frac{\Gamma\left(\alpha+z-1\right)\Gamma(p+z-1)\Gamma(q-z+1)}{\Gamma(\tilde{\alpha}+z-1)}\mathrm{d}z.
\end{equation*}
Because of the property of the Mellin transform, convergence of the integral is guaranteed if $h$ is in the strip of analyticity. The integral can be expressed in terms of a $_2F_1(a,b;c;x)$ hypergeometric function \citep[][equation 16.5.1]{NIST}, delivering:
\begin{equation*}
	f^{DSD}_{\sigma^2}(s)=\frac{\beta\Gamma(\tilde{\alpha})}{b\tilde{\beta}\Gamma(\alpha)\Gamma(p)\Gamma(q)}\frac{\Gamma(q+\alpha)\Gamma(q+p)}{\Gamma(q+\tilde{\alpha})}\left(\frac{s\beta}{b\tilde{\beta}}\right)^{-q-1} {_2F_1}\left(q+\alpha,q+p;q+\tilde{\alpha};-\frac{b\tilde{\beta}}{s\alpha}\right).
\end{equation*}
Once the solution of the integral equation is found, assessing if it constitutes a proper density function is necessary. To this aim, it is required to check if $\int_0^{+\infty}f^{DSD}_{\sigma^2}(s)\mathrm{d}s=1$ and $f^{DSD}_{\sigma^2}(s)>0,\ s\in\mathbb{R}^+$. The first property can be proved showing that:
\begin{equation*}
	\int_0^{+\infty}\left(\frac{s\beta}{b\tilde{\beta}}\right)^{-q-1} {_2F_1}\left(q+\alpha,q+p;q+\tilde{\alpha};-\frac{b\tilde{\beta}}{s\alpha}\right)\mathrm{d}{s}=\left(\frac{\beta}{b\tilde{\beta}}\right)^{-1}\frac{\Gamma(\alpha)\Gamma(p)\Gamma(q)\Gamma(q+\tilde{\alpha})}{\Gamma(\tilde{\alpha})\Gamma(q+\alpha)\Gamma(q+p)},
\end{equation*}
applying result 7.511 of \citet{gradshteyn2007}.
On the other hand, to verify the positivity of the function, the following integral representation of the hypergeometric function can be exploited \citep[Equation~9.111]{gradshteyn2007}:
\begin{equation*}
	_2F_1\left(a,b;c;x\right)=\frac{\Gamma(c)}{\Gamma(b)\Gamma(c-b)}\int_0^1t^{b-1}(1-t)^{c-b-1}(1-tx)^{-a}\mathrm{d}t,\ \Re(c)>\Re(b)> 0.
\end{equation*}
Provided that the representation holds, it is possible to note that if $x<0$ the result is positive. In our case, if $\tilde{\alpha}>p$ the function $f^{DSD}_{\sigma^2}(s)$ is positive and, hence, it is a probability density function.
 \end{proof}

Comparison between \eqref{eq:b2} and \eqref{eq:dens_balpri} clearly shows how DSD priors modify the base B2 distribution. Indeed, the kernels of these distributions differ with respect to the last factor: they both lay in the interval $(0,1)$ and are increasing functions of $s$. The couples $(\alpha, \beta)$ and $(\widetilde{\alpha}, \widetilde{\beta})$ determine the growth rate of such factors leading to different shapes of DSD prior with respect to the base B2 prior: an example of such modification is reported in Figure \ref{prior_sim}.  
As a notable special case of Theorem~\ref{thm:balpri}, due to the properties of the $_2F_1$ function, if $V=\mathcal{V}$, i.e. $\tilde{\alpha}=\alpha$ and $\tilde{\beta}=\beta$, the prior reduces to $\sigma^2\sim B2(b,p,q)$. Furthermore, if $p=\tilde{\alpha}$ the DSD prior is $\sigma^2\sim \text{B2}\left(\frac{b\tilde{\beta}}{\beta},\alpha, q\right)$.

The dependence of the prior on $\mathbf{K}_\nu$ is captured by $\widetilde{\alpha}$ and $\widetilde{\beta}$ as defined in \eqref{eq:tildes}: all the generalizations proposed in what follows are based on the fact that, to obtain the DSD prior, one must consider the appropriate eigenvalues vector, i.e. those that express the weights of the associated QF $V_\nu|\sigma^2$. 

Thus, taking account of the effect of the design is quite a straightforward task. Indeed, when $\boldsymbol{\nu}=\mathbf{Z}\boldsymbol\gamma$ with $\mathbf{Z}\in\mathbb{R}^{n\times m}$ and $\boldsymbol{\gamma}|\sigma^2\sim\mathcal{N}_m\left(\boldsymbol{0},\sigma^{2}\mathbf{K}_{\gamma}^{-1}\right)$, with  $\mathbf{K}_\gamma \succ 0$, the appropriate eigenvalues are those of the matrix $\mathbf{M}\mathbf{Z}\mathbf{K}_{\boldsymbol{\gamma}}^{-1}\mathbf{Z}^\top$ whose non-null elements coincide with non-null eigenvalues of $\mathbf{Z}^\top\mathbf{M}\mathbf{Z}\mathbf{K}_{\boldsymbol{\gamma}}^{-1}$. 

The very special case of $m=1$ is useful to shed a light on the rationale behind DSD priors. Indeed, this corresponds to managing a fixed effect, and, for this reason, we switch the notation to
$\boldsymbol{\nu}=\mathbf{x}\beta$ where $\mathbf{x}\in\mathbb{R}^{n\times 1}$ and
$\beta|\sigma^{2}\sim\mathcal{N}\left(0,\sigma^{2}\right)$.
The sampling variance
\begin{equation*}
	V_{\boldsymbol{\nu}}=\frac{\boldsymbol{\nu}^\top\mathbf{M}\boldsymbol{\nu}}{n-1}=\frac{\mathbf{x}^\top\mathbf{M}\mathbf{x}}{n-1}\beta^2=\frac{\sum_{i=1}^n(x_i-\bar x)^2}{n-1}\beta^2=\beta^2s^2_x,
\end{equation*}
has conditional distribution $V_{\boldsymbol{\nu}}|\sigma^2\sim \text{Gamma}\left(\frac{1}{2},\frac{1}{2\sigma^2s^2_x}\right)$
since the only non-null weight of the QF is $s^2_x$.  If $p=\frac{1}{2}$, as will be justified in Section \ref{sec:elicit}, the DSD prior on $\sigma^2$ turns out to be
$
	\sigma^2\sim\text{B2}\left(\frac{b}{s^2_x},\alpha,q\right)
$.
As a consequence, in a model with $P$ fixed effects, the same prior distribution on all standardized coefficients arises.

\subsection{The Case of Intrinsic priors}\label{sec:IGMRF}

IGMRF priors, that are widely employed in spatial and spatio-temporal modelling of areal data and in low-rank models, are improper priors with sparse rank-deficient precision matrix. Let
$\text{Rank}\left(\mathbf{K_\nu}\right)=n-\kappa$: the rank-deficiency $\kappa$ is defined as the order of the IGMRF by \cite{rue2005gaussian}. Again, we start by assuming $\mathbf{Z}=\mathbf{I}_n$.
The developments in what follows are based on the spectral decomposition $\mathbf{K_\nu}=\mathbf{U}\boldsymbol{\Lambda}\mathbf{U}^\top
=\mathbf{U}_+\boldsymbol{\Lambda}_+\mathbf{U}_+^\top$,
where $\boldsymbol{\Lambda}_{+}\in\mathbb{R}^{(n-\kappa)\times (n-\kappa)}$ is a diagonal matrix with diagonal entries corresponding to the non-null eigenvalues of $\mathbf{K_\nu}$ and $\mathbf{U}_+$ spans the column space of $\mathbf{K_\nu}$; in addition, $\mathbf{U}_0$ spans the null space. 

From a probabilistic viewpoint, an IGMRF of order $\kappa$ embeds a proper distribution on the $(n-\kappa)$-dimensional column space of $\mathbf{K}_\nu$ describing deviations from the $\kappa$-dimensional null space, i.e. the implicit systematic part of the model. This important feature of IGMRF priors is highlighted in \cite{rue2005gaussian}, Section 3.4.1, where $\boldsymbol{\nu}$ is decomposed as:
\begin{equation}
\label{eq:dec_igmrf}
\begin{aligned}
	\boldsymbol{\nu}=\text{trend}(\boldsymbol{\nu})+\text{residuals}(\boldsymbol{\nu})
	=\mathbf{U}_0\mathbf{U}_0^\top\boldsymbol{\nu}+\mathbf{U}_{+}\mathbf{U}_+^\top\boldsymbol{\nu}
	=\mathbf{U}_0\boldsymbol{\nu}_0+\mathbf{U}_{+}\boldsymbol{\nu}_{+},
\end{aligned}	
\end{equation}
and $\boldsymbol{\nu}_+|\sigma^2\sim\mathcal{N}_{n-\kappa}\left(\boldsymbol{0},\sigma^{2}\boldsymbol{\Lambda}_+^{-1}\right)$. This decomposition has also been used in \citet{goicoa2018} to study the need for linear constraints in spatio-temporal disease mapping and in \citet{Klein2016} for obtaining their scale dependent priors.

Identifiability of $\boldsymbol{\nu}$ can be ensured by adopting the linear constraint $\mathbf{U}_0^\top\boldsymbol{\nu}=\boldsymbol 0$, as suggested in \cite{held-env-2011}. When implementing LGMs with IGMRF priors, we always include $\boldsymbol{U}_0$ in the linear predictor, if its columns are not linearly dependent on other model components. The prior on $\boldsymbol{\nu}_0$ is specified as in Section \ref{sec:deriv}. To give some examples, if a random walk (RW) prior of order 1 is employed, the null space is spanned by the unit vector and hence it is already included in the intercept term, while the null space of a RW of order 2 is spanned by a first order polynomial, requiring the inclusion of a linear trend.

The sampling variance of a \textit{constrained} IGMRF random effect is
\begin{equation}
	V_{\boldsymbol{\nu}|\mathbf{U}_0^\top\boldsymbol{\nu}=\boldsymbol 0}=\frac{\boldsymbol{\nu}_+^\top\mathbf{U}_+^\top\mathbf{M}\mathbf{U}_+\boldsymbol{\nu}_+}{n-1},
\end{equation}
with  $\boldsymbol{\lambda}=\text{eigen}\left(\mathbf{U}_{+}^\top\mathbf{M}\mathbf{U}_{+}\boldsymbol{\Lambda}_{+}^{-1}\right)=\text{eigen}\left(\mathbf{M}\mathbf{K}_{\nu}^{-}\right)$ being the eigenvalues to be taken into account for computing $\widetilde{\alpha}$ and $\widetilde{\beta}$. 

This can be directly extended to the case $\boldsymbol{\nu}=\mathbf{Z}\boldsymbol{\gamma}$, $\mathbf{Z}\in\mathbb{R}^{n\times m}$, $\boldsymbol{\gamma}|\sigma^2\sim\mathcal{N}_m\left(\mathbf{0},\sigma^{2}\mathbf{K}_\gamma^{-}\right)$. Given the decomposition $\boldsymbol{\gamma}=\mathbf{U}_{\gamma 0}\boldsymbol{\gamma}_{0}+\mathbf{U}_{\gamma+}\boldsymbol{\gamma}_{+}$
 obtained by applying \eqref{eq:dec_igmrf} to $\boldsymbol\gamma$, one gets
$
	\boldsymbol{\nu}=\mathbf{ZU}_{\gamma0}\boldsymbol{\gamma}_{0}+\mathbf{ZU}_{\gamma+}\boldsymbol{\gamma}_{+}.
$
Introducing the linear constraint $\mathbf{U}_{\gamma0}^\top\mathbf{Z}^\top\boldsymbol{\nu}=\mathbf{U}_{\gamma0}^\top\mathbf{Z}^\top\mathbf{Z}\boldsymbol{\gamma}=\boldsymbol{0}$, the sampling variance is
\begin{equation}
	V_{\boldsymbol{\nu}|\mathbf{U}_{\gamma0}^\top\mathbf{Z}^\top\boldsymbol{\nu}=\boldsymbol{0}}=\frac{\boldsymbol{\gamma}_+^\top\mathbf{U}_{\gamma+}^\top\mathbf{Z}^\top\mathbf{MZ}\mathbf{U}_{\gamma+}\boldsymbol{\gamma}_+}{n-1},
\end{equation}
i.e. a QF whose weights are the non-null eigenvalues of the semi-positive definite matrix $\mathbf{U}_{\gamma+}^\top\mathbf{Z}^\top\mathbf{M}\mathbf{Z}\mathbf{U}_{\gamma+}\boldsymbol{\Lambda}_{\gamma+}^{-1}$ that coincide with the non-null eigenvalues of $\mathbf{Z}^\top\mathbf{M}\mathbf{Z}\mathbf{K}_{\gamma}^-$.

It is worth noting that linear constraints on random effects envisioned in this section guarantee property of the posterior distribution: indeed, linear constraints address partial improperty of IGMRF priors, by removing the improper prior on the implicit systematic part of the model. For a comprehensive discussion on this topic in the context of structured additive distributional regression, see \cite{Klein2016}.

\subsection{Prior Elicitation}\label{sec:elicit}

The aim of this section is to deliver a simple and intuitive prior elicitation strategy that can be adopted for every LGM whose architecture falls within Table \ref{tab:intro}. This is favored by DSD priors thanks to their ability to homogenize the prior interpretation in terms of the i.i.d. model, independently on the considered LGM. The B2 distribution serves as a base prior to be adapted to the considered model by solving integral equation \eqref{eq:inteqth}: this guarantees the same marginal distribution of the sampling variances $f_{V_{\nu_j}}\equiv f_\mathcal{V},\ \forall j$. As a starting point, we suggest default values of the shape parameters $p$ and $q$. As regards the scale parameter $b$, we find it convenient to set it by considering observed data variability. 

We set $p=1/2$, leading to the specification of an Half-$t$ distribution on $\sigma$, because this guarantees a finite non-null density at 0, as suggested in \citet{Perez2017}. Moreover, this choice preserves the conditions for the existence of DSD priors stated in Theorem \ref{thm:balpri}, given that $\tilde{\alpha}>1/2$ holds for every LGM. 
We suggest $q=1.5$ as a default choice: this delivers an Half-$t$ with 3 degrees of freedom on $\sigma$. However, we show in subsequent simulations and applications that posterior inference shows small sensitivity to $q$. 

The scale parameter $b$ is the crucial quantity to be specified, because of its impact on the amount of shrinkage/smoothness of the posterior estimates. Linking this parameter to the variability shown by observed data allows automatic scaling of the prior to the considered application. Such scaling is achieved by means of a probability statement on the variability of the $j$-th model component, in the same spirit of \citet{wakefield2007disease,Klein2016} and \citet{Simpson2017}, among the others. 
In particular, we control the probability $\pi_0$ that the marginal sampling variance of a model component $\boldsymbol\nu_j$ is lower than a value $c$:
\begin{equation}
	\label{eq:pi0}
	\mathbb{P}\left[V_{\nu_j}\leq c \right]=\pi_0,\quad \forall j.
\end{equation}
When the likelihood is Gaussian, we set $c=s_y^2$, where $s_y^2$ denotes the sample variance of the response. Hence, the data variance is the $\pi_0$-th quantile of the marginal distribution of $V_{\nu_j},\;\forall j$: the higher $\pi_0$ the lower the prior variability, with consequent heavier shrinkage.
In the case of non-Gaussian likelihood, we find it appealing to resort to the concept of pseudo-variance \citep{gelmanrubin2013}. For an overview of GLMs pseudo-variances under canonical link functions, see Table 1 in \cite{piironen2017}.
Once $c$ is fixed, $b$ is retrieved by solving via Monte Carlo simulation the equation $\mathbb{P}\left[b V_{\nu_j}^*\leq c \right]=\pi_0$, where $V_{\nu_j}^*=V_{\nu_j}/b\sim {_2}\mathcal{F}_0(\alpha,\beta,p,q)$.

Summarizing, the base prior $\text{B2}(b=f(y,\pi_0), 0.5, 1.5)$ is suggested as a default choice, so that
prior specification of a given LGM reduces to set the parameter $\pi_0$, while DSD priors filter out the effect of design and structure matrices as well as possible linear constraints that hamper comparability of priors in different models. In Sections \ref{sec:simulsec} and \ref{sec:application} we show that $\pi_0$ is a decisive quantity in terms of impact on posterior inference: the study of the variation of posterior inference with respect to $\pi_0$ is able to give insights on sensitivity to prior specification.


\subsection{Decomposition of the Linear Predictor Sampling Variance}

In order to provide a further illustration of the rationale behind DSD priors, we discuss the \textit{a priori} decomposition of the linear predictor sampling variance marginally with respect to scalers $\boldsymbol{\sigma}^2$. Given the equivalence between specifying priors on scalers for fixed and random effects noticed at the end of Section \ref{sec:deriv}, the linear predictor \eqref{eq:linpred} can be expressed as
\begin{equation*}
\boldsymbol{\eta}=\mathbf{1}\beta_0+\sum_{j=1}^{P+Q}\boldsymbol{\nu}_j,
\end{equation*}
where model components refer to fixed effects for $j\leq P$ and to random effects for $P<j\leq P+Q$. The sampling variance of the linear predictor, known as regression variance in the context of linear regression models, is the random variable:
\begin{equation*}
V_\eta=\sum_{j=1}^{P+Q}V_{\nu_j}+2\sum_{j=1}^{P+Q}\sum_{k<j}C_{\nu_j,\nu_k},
\end{equation*}
where terms $C_{\nu_j,\nu_k}$ refer to the bilinear form
$
C_{\nu_j,\nu_k}=\boldsymbol{\nu_j}^\top\mathbf{M}\boldsymbol{\nu_k}/(n-1)$.
Since variance components are a priori independent, $\mathbb{E}\left[C_{\nu_j,\nu_k}\right]=0,\;\forall j\neq k$, and DSD prior guarantee that, $\mathbb{E}\left[V_\eta\right]=(P+Q)\mathbb{E}\left[\mathcal{V}\right]$. Concerning the posterior distributions of $V_{\nu_j}|\mathbf{y}$ and $C_{\nu_j,\nu_k}|\mathbf{y}$, in general, $\mathbb{E}\left[C_{\nu_j,\nu_k}|\mathbf{y}\right]\neq 0$ since the likelihood function induces dependence among model components: from an applied point of view, posterior analysis of variance/covariance decomposition can give insights about the explanatory power of each model component without ignoring their underlying relationships. 
	\section{Simulation Exercise}\label{sec:simulsec}
	The goal of the simulation exercise is to point out the stability of the DSD priors with respect to changes in the design. As data generating process we consider
$$
y_i\sim\mathcal{N}\left(g(x_i)=5+\sin(\pi x_i),\sigma^2_y\right),\quad i=1,\dots,50;
$$
where $\mathbf{x}$ is an equally spaced set of values in the interval $[-1;1]$. 
A semi-parametric regression model implementing Bayesian P-splines \citep{lang2004bayesian} is specified:
$$
\boldsymbol{\eta}=\boldsymbol{1}\mu+ \mathbf{x}\beta+\mathbf{Z}\boldsymbol{\gamma},
$$
where the design matrix $\mathbf{Z}$ is a basis matrix of cubic B-splines bases over $m$ equally spaced knots. For the spline coefficients $\boldsymbol{\gamma}$, a second-order RW prior is imposed, leading to a precision matrix $\mathbf{K}_\gamma$ with rank $m-2$.
The performances of point and interval estimators of $g(x_i)$ are monitored by computing posterior means and $90\%$ credible intervals. 

\begin{figure}[]
\centering
\includegraphics[width=\linewidth]{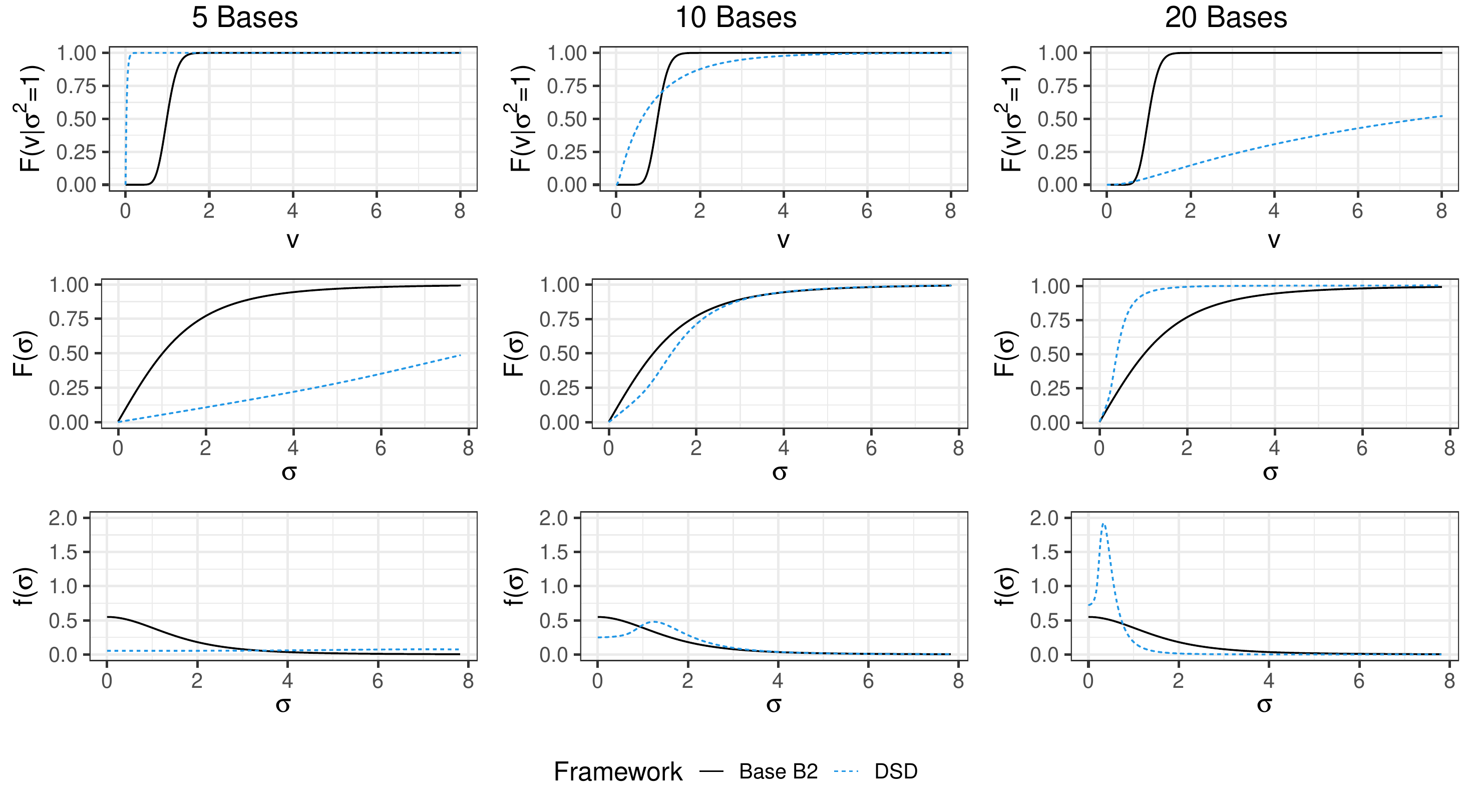}
 \caption{From top to bottom: CDFs of $V_\nu|\sigma^2=1$ (dashed line) and $\mathcal{V}|\sigma^2=1$ (solid line). CDFs and PDFs of implied priors on $\sigma$ with a B2 prior or a DSD prior for $\sigma^2$.}
 \label{prior_sim}
 \end{figure}

We consider three different values for $\sigma_y^2$, to control the signal-to-noise ratio $\rho=\frac{\text{Var}\left[\boldsymbol{\eta}\right]}{\text{Var}\left[\boldsymbol{\eta}\right]+\sigma_y^2}$, which is set equal to $\rho\in\{0.25,0.5,0.75\}$. Moreover, we set a grid of values for $m$, ranging from 5 to 30 spaced by 5. The same LGM is fitted on $B=500$ generated datasets, with different hyperpriors for the scale parameter $\sigma^2_\gamma$, distinguishing between standard priors (Half-Cauchy, the Half-$t$ on the standard deviation, Inv-Gamma($10^{-3}$,$10^{-3}$), labeled with IG-J, and Inv-Gamma(1,$5\times 10^{-5}$), labeled with IG-INLA, on the variance) and DSD priors for which hyperparameter values $q\in \{0.5,1.5,15\}$ and $\pi_0\in \{0.1, 0.25, 0.5, 0.75\}$ are explored. 

Since a Gaussian model is assumed for the data, a prior for the residual variance must be set. In line with the indications provided within the \texttt{rstanarm} package \citep{rstanarm}, we specified an exponential distribution on the residual standard deviation, with rate parameter $1/s_y^2$.


 \begin{figure}[]
\centering
\includegraphics[width=\linewidth]{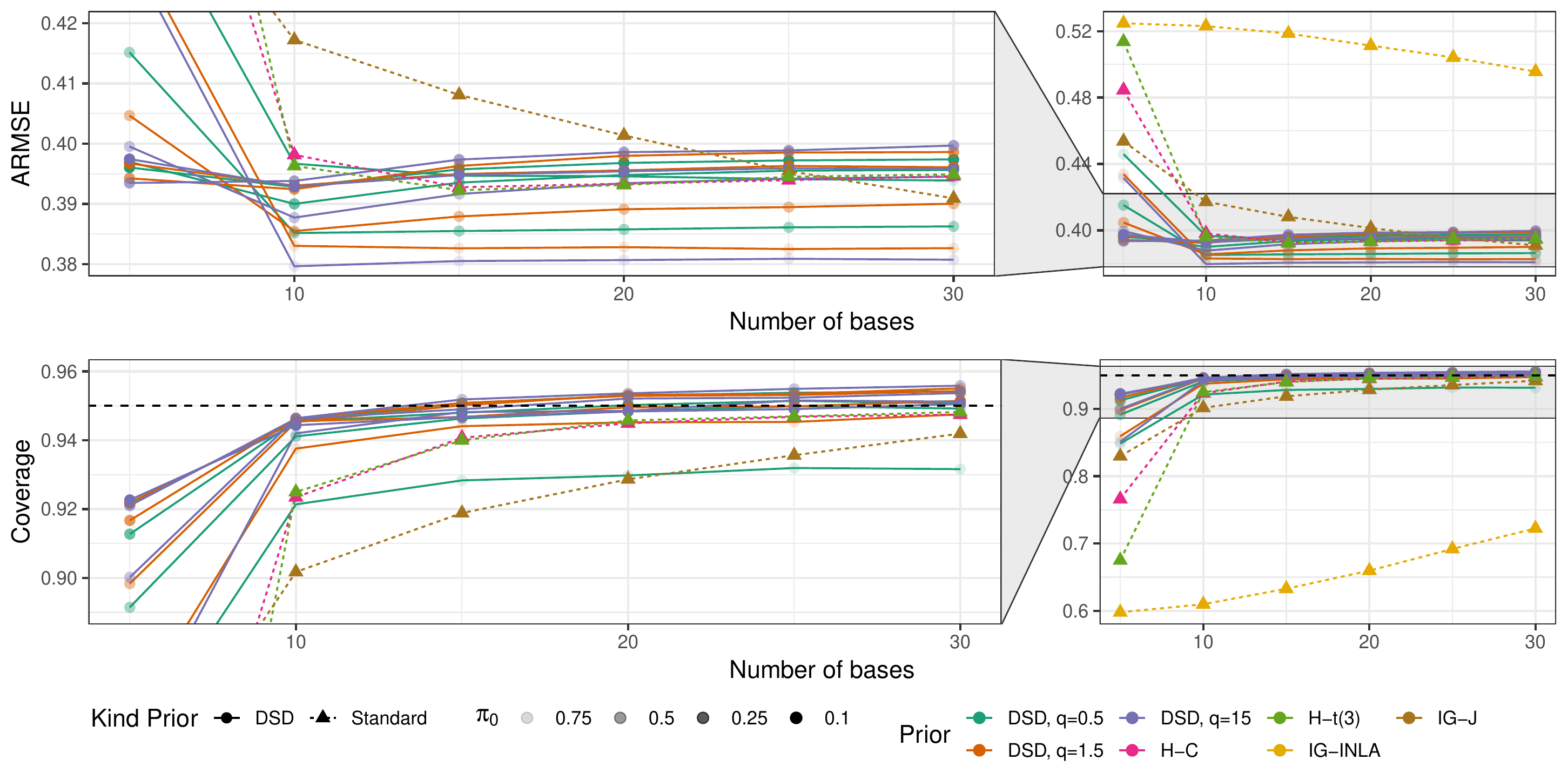}
 \caption{ARMSE and frequentist coverage with respect to the number of bases $m$ for the estimates under the considered priors and $\rho=0.25$. }
 \label{lines_rmse}
 \end{figure}


Once the prior elicitation step described in Section \ref{sec:elicit} is carried out, and the hyperparameters $b$, $p$, and $q$ are set, the goal is to impose a $_2\mathcal{F}_0(\alpha,\beta/b,p,q)$ prior distribution on $V_\nu$. We recall that it coincides with the marginal prior of $V_\nu$ when an i.i.d effect is considered, with a $B2(b,p,q)$ hyperprior on the scale. However, different model specifications imply different effects structures, that deviate from the i.i.d. structure. Hence, the DSD prior modifies the base B2 distribution to preserve the $_2\mathcal{F}_0(\alpha,\beta/b,p,q)$ marginal prior on $V_\nu$.

\begin{figure}[]
\centering
\includegraphics[width=\linewidth]{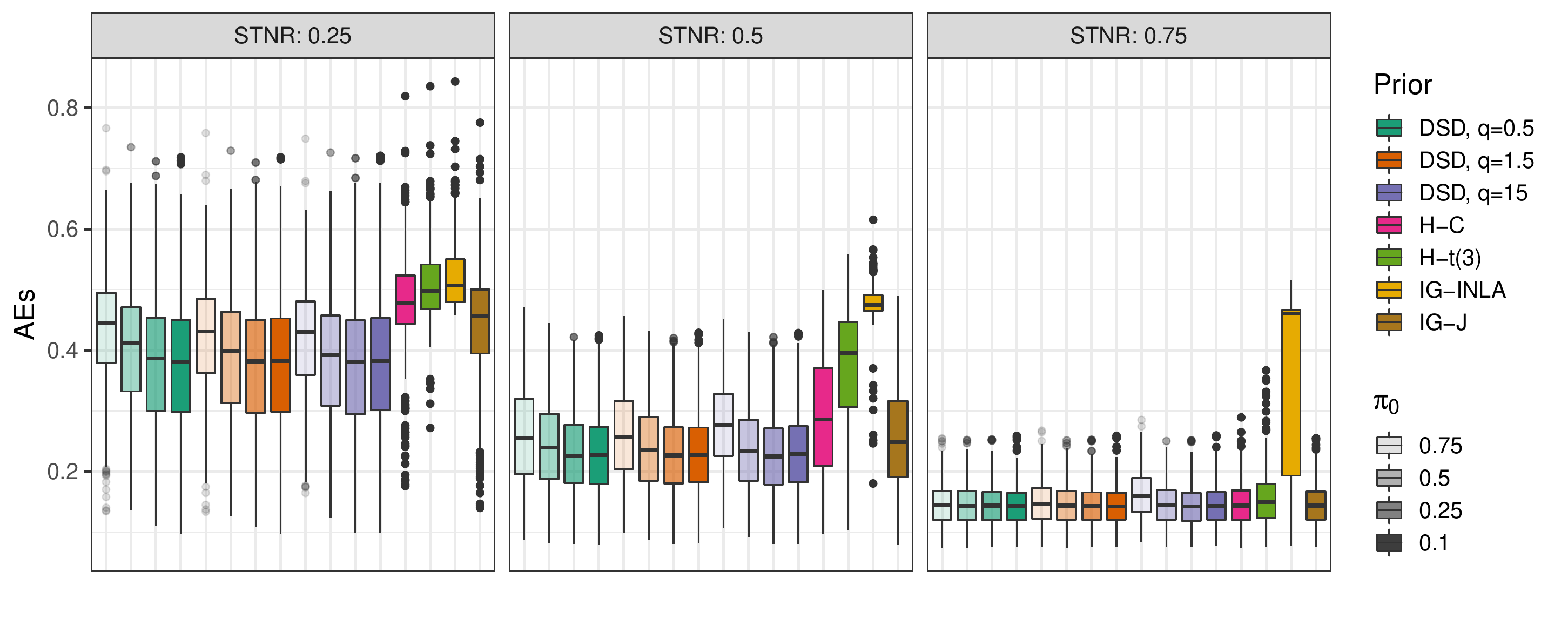}
 \caption{Monte Carlo distributions of AEs when $m=5$.}
 \label{box_rmse}
 \end{figure}
 
Figure \ref{prior_sim} allows noticing how the DSD prior changes with respect to the selected number of bases $m$: the base B2 distribution is plotted up to quantile 0.999 determining the range of the $x$ axis, and setting $\pi_0=0.5$ and $q=1.5$. We report the implied prior distribution on the standard deviation that has a positive finite density in 0 since $p=0.5$. 
As it can be noticed from the first row of Figure~\ref{prior_sim}, the CDF of $V_\nu|\sigma^2$ is steeper than the CDF of $\mathcal{V}|\sigma^2$ when $m=5$: as a consequence, a prior distribution flatter than the B2 density (see third row) is required on $\sigma^2$ to obtain the same marginal prior of the i.i.d. model. Increasing $m$ reduces the steepness of the CDF of $V_\nu|\sigma^2$: when $m=20$ the DSD prior induced on $\sigma$ appears to be peaked nearby 0, with the CDF that reaches the asymptote in 1 faster than the $B2$ prior.

In Figure \ref{lines_rmse}, the root mean squared error averaged over the curve (ARMSE) and the frequentist coverage are reported. The main differences among the priors involved in the study arise when $m=5$. In this case, the calibration carried out by DSD priors produces estimates characterized by a lower ARMSE and credible intervals with frequentist coverages closer to the nominal level. When $m$ increases, models performances tend to converge quickly, with the exception of those obtained with the IG-INLA prior. Figure \ref{lines_rmse} focuses on the case $\rho=0.25$, results for the remaining cases are reported in Figure \ref{supp_sim1}, where similar patterns can be detected.

\begin{figure}[h]
	\centering
	\includegraphics[width=\linewidth]{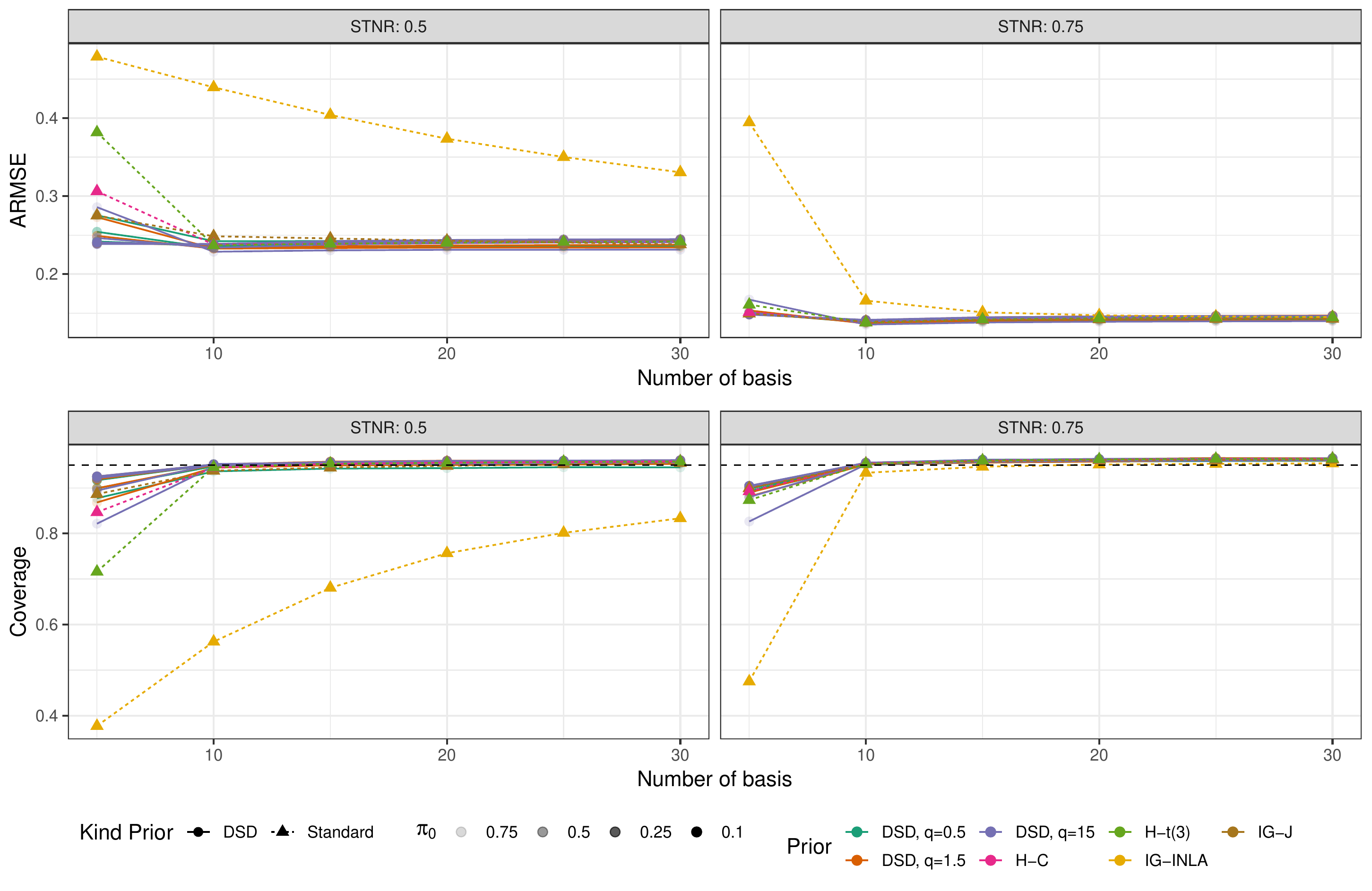}
	\caption{ARMSE and frequentist coverage with respect to the number of bases $m$ for the estimates under the considered priors.}
	\label{supp_sim1}
\end{figure}

 To provide an idea of the uncertainty of the sampling distribution, the distributions of the average errors (AEs) of the curves are depicted in Figure \ref{box_rmse} through boxplots. When $\rho\in\{0.25,0.5\}$, DSD priors deliver lower AEs than standard priors. Furthermore, with the exception of $\pi_0=0.75$ for which higher errors are obtained, the AEs distributions obtained with other values of $\pi_0$ are fairly stable. Lastly, results obtained for $\rho=0.75$ show little variability, with the exception of the IG-INLA prior that produces markedly higher AEs values.

Lastly, the three panels of Figure \ref{box_rmse} highlight that, when using DSD priors, changes in posterior inference are mainly due to $\pi_0$, while results are less reactive to changes of the $q$ values.  

	\section{Applications}\label{sec:application}
	In this section, we show two applications based on datasets available in the \texttt{INLA} package and analyzed in \cite{rue2005gaussian} among others: the Munich rental and the Tokyo rainfall datasets. 

\subsection{Munich Rental Data}

\begin{table}[]
	\begin{center}
\begin{tabular}{lccccc}
\toprule
& $m_j$ &$\widetilde\alpha$ & $\widetilde\beta$ & $\mathbb{E}[V_{\nu_j}|\sigma^2_{\nu_j}=1]$  & $\mathbb{V}[V_{\nu_j}|\sigma^2_{\nu_j}=1]$\\
\midrule
     size & 134 &0.735 & $1.4\times 10^{-4}$ & 5001.1 &  $3.4\times 10^7$\\
     year & 84 &0.602 & $2.7\times 10^{-4}$ & 2190.9 & $7.9\times 10^6$\\
     location & 380 &9.375 & 11.581 & 0.809 & 0.0698 \\
     \bottomrule
\end{tabular}
\caption{Characteristics of the model components sampling variances.}
\label{tab:munich-abtil}
\end{center}
\end{table}

\begin{figure}
    \centering
    \includegraphics[width=\linewidth]{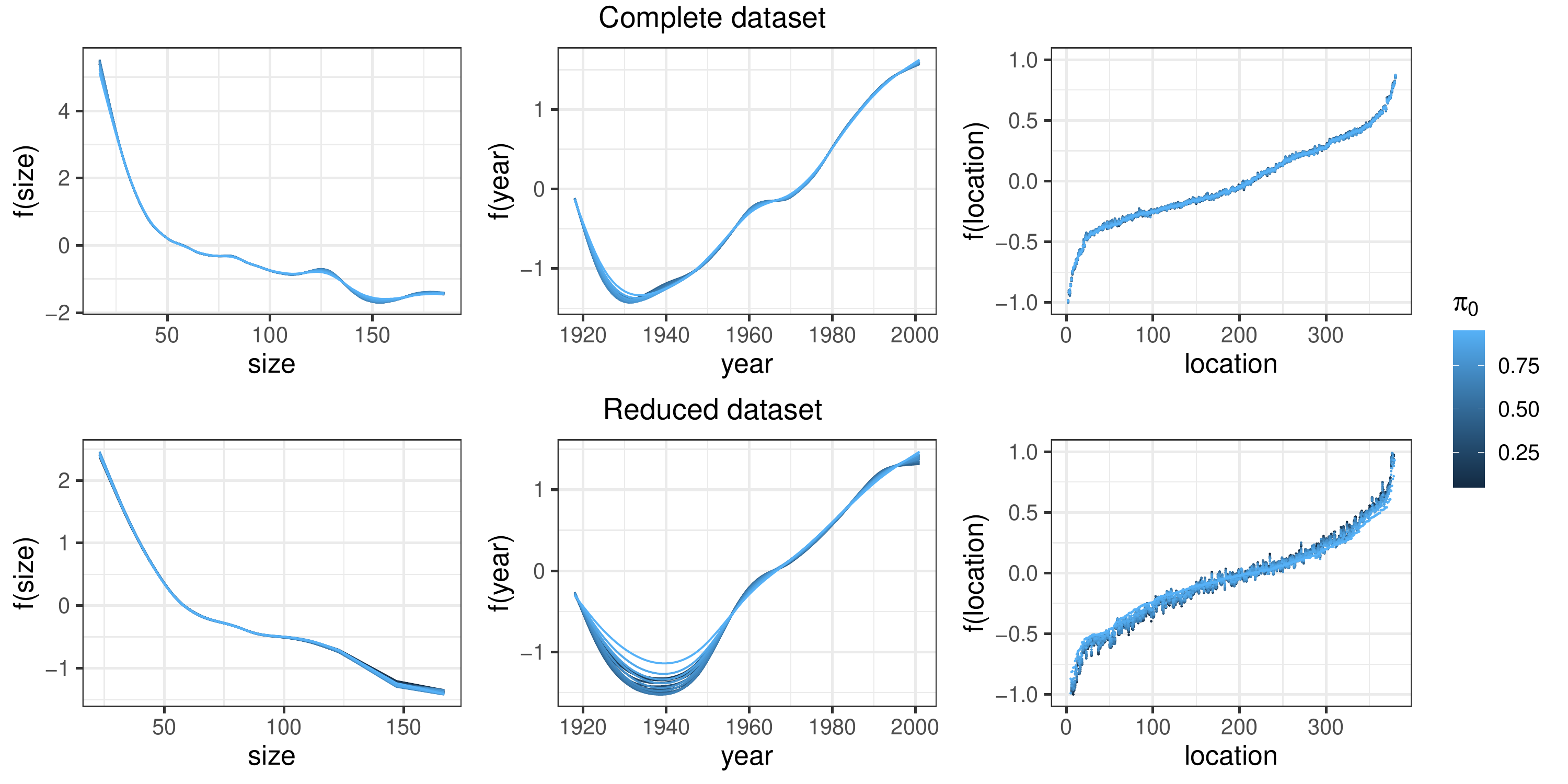}
    \caption{$\mathbb{E}[f(\mathbf x_j)|\mathbf{y}]$ of model components for complete and reduced datasets (q=1.5).}
    \label{fig:munich_eff}
\end{figure}

The dataset comprises $n=2035$ observations from the 2003 Munich rental guide. The response variable is the rent price (per square meter in Eur\\
s); covariates comprise spatial location, floor space, year of construction, and a set of dichotomous variables describing house features such as presence/absence of central heating, bathroom, etc. To study prior sensitivity, we also consider a reduced dataset obtained by randomly sampling $n=300$ observations. Following previous analyses, a Gaussian likelihood is adopted and the linear predictor is specified as
\[
\boldsymbol{\eta}=\mathbf{1}\beta_0+\mathbf{X}\boldsymbol{\beta}+\boldsymbol{\nu}_{\text{year}}+\boldsymbol{\nu}_{\text{size}}+\boldsymbol{\nu}_{\text{loc}},
\]
where
\[
\boldsymbol{\nu_j}=\mathbf{Z}_j\boldsymbol{\gamma}_j;\quad\boldsymbol{\gamma}_j|\sigma^2_{\gamma_j}\sim\mathcal{N}\left(\mathbf{0},\sigma^2_{\gamma_j}\mathbf{K}_{\gamma_j}\right),\quad j\in\{\text{year, size, location}\}.
\]
Both $\mathbf{K}_{\gamma_\text{year}}$ and $\mathbf{K}_{\gamma_\text{size}}$ are precision matrices of a continuous time second-order RW, hence their rank-deficiency is $\kappa=2$. Since their null spaces are spanned by a first-order polynomial, corresponding linear terms are added to the linear predictor.  $\mathbf{K}_{\gamma_\text{loc}}$ is built as $\mathbf{D}-\mathbf{W}$ where $\mathbf{W}$ is the adjacency matrix of Munich districts and $\mathbf{D}$ is diagonal with $i$-th diagonal entry corresponding to the number of neighbors to the $i$-th district: being the graph connected, $\kappa=1$ and no terms need to be added to the linear predictor. 
All random effects are constrained as described in Section \ref{sec:IGMRF}. 

\begin{figure}
	\centering
	\includegraphics[width=\linewidth]{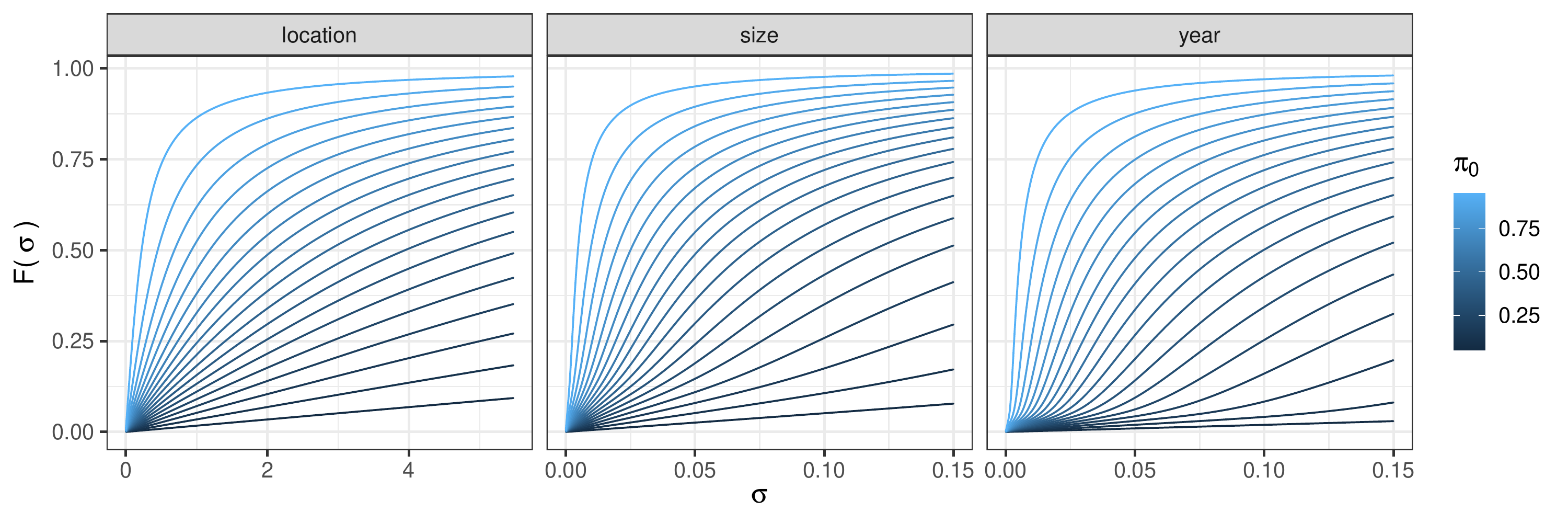}
	\caption{CDFs of the priors on $\sigma_{\gamma_j}$ for the three random effects under the complete datasets.}
	\label{fig:dsdprior}
\end{figure}

Table \ref{tab:munich-abtil} shows parameters of the conditional distributions $V_{\nu_j}|\sigma^2_{\gamma_j}$, where the last two columns report expectation and variance for $\sigma^2_{\gamma_j}=1$. The same prior on scale parameters would result in sensibly different marginal distributions of $V_{\nu_j}$, with random effects built on RW structures being largely more variable a priori than the spatial random effect. The CDFs of DSD priors as a function of $\pi_0$ are shown in Figure \ref{fig:dsdprior}: focusing on the ranges of the $x$-axes, one can note that probability mass is distributed on far larger values for the spatial effect with respect to other components, in order to compensate for the aforementioned features of the conditional sampling variances. 

\begin{figure}
    \centering
    \includegraphics[width=\linewidth]{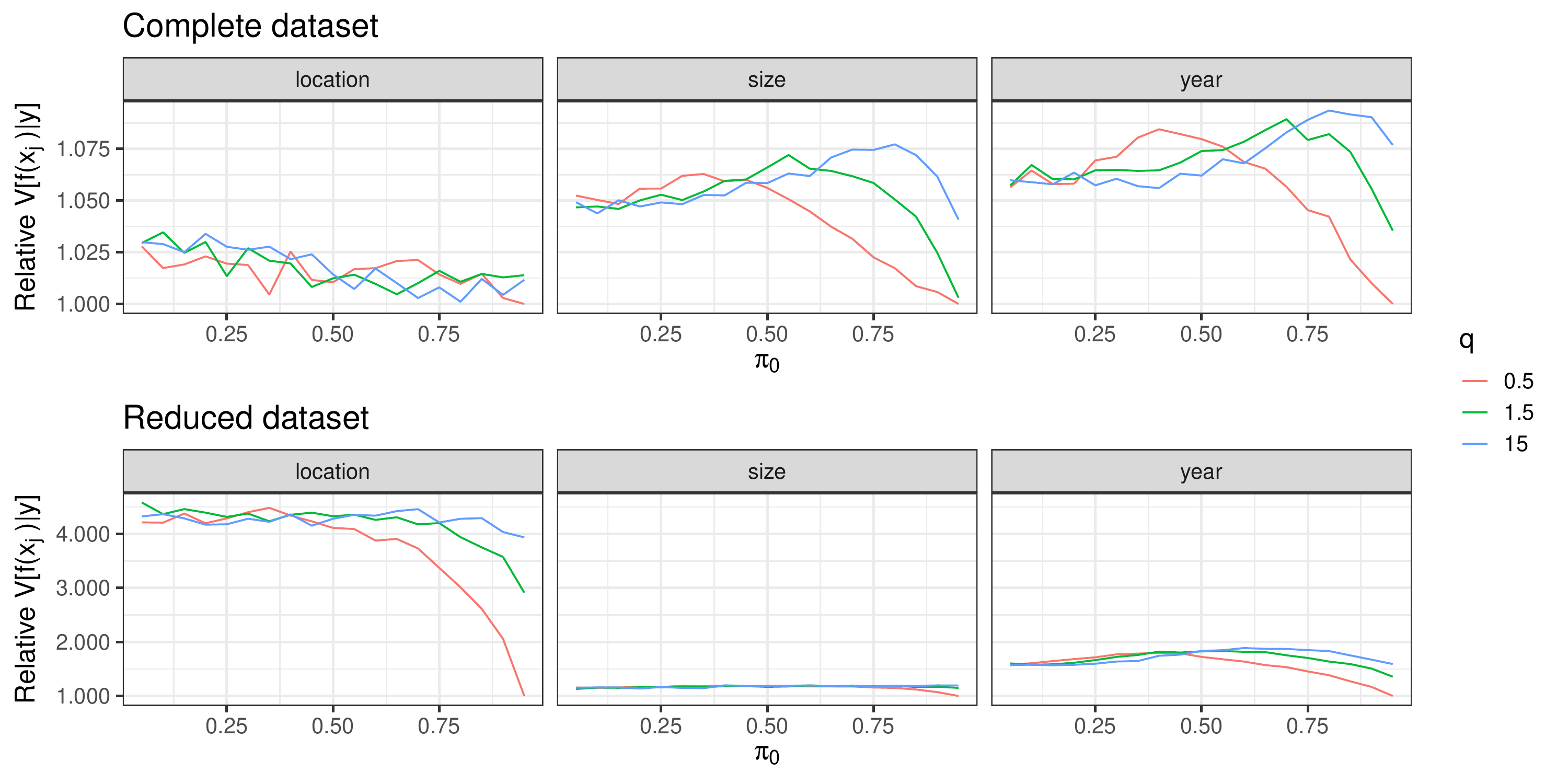}
    \caption{$\mathbb{V}[\mathbb{E}[f(j)|\mathbf{y}]]$ divided by the component-wise maximum as a function of $\pi_0$.}
    \label{fig:munich_vsplit}
\end{figure}

In Figure \ref{fig:munich_eff}, the posterior means of effects of size, year and location are reported for $\pi_0$ ranging from 0.05 to 0.95 and $q=1.5$. In the complete dataset, prior sensitivity is limited, particularly in comparison with the reduced dataset. Furthermore, the spatial effect shows the most noticeable reaction to the loss of information. To highlight the changes due to $\pi_0$, sensitivity curves are reported in Figure \ref{fig:munich_vsplit}, where relative  $\mathbb{V}[\mathbb{E}[f(j)|\mathbf{y}]]$ are shown for the three predictor effects $f(j)$. All variances are divided by the component-wise smallest variance in order to emphasize relative changes. As expected, increasing $\pi_0$ enhances smoothing: location effect is the least sensitive to prior specification in the complete dataset, while the year effect shows some sensitivity in both cases.

Lastly, we study the decomposition of the linear predictor posterior variance $V_\eta|\mathbf{y}$, joining fixed effects. Posterior covariances $C_{\nu_j,\nu_k}=\frac{\boldsymbol{\nu_j}^\top\mathbf{M}\boldsymbol{\nu_k}}{n-1}\bigg|\mathbf{y}$ are negligible, hence posterior variances $V_{\nu_j}|\mathbf{y}$, whose distributions are shown in Figure \ref{fig:munich_vf}, give clear-cut information on the contribution of each model component to the explained variability which receives the same prior weight by means of DSD priors. The size variable turns out to be the most important covariate for explaining rental prices, followed by fixed effects, year of construction and location.

\begin{figure}
	\centering
	\includegraphics[width=.8\linewidth]{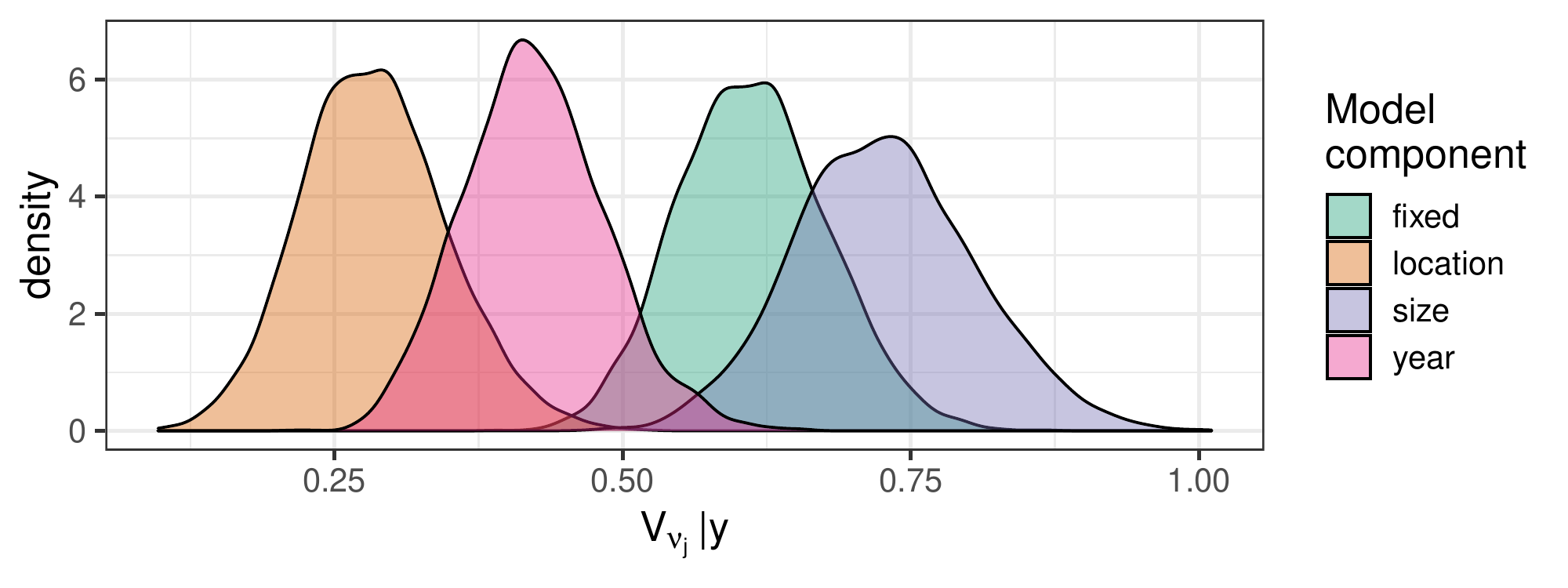}
	\caption{Posterior distributions $V_{\nu_j}|\mathbf{y}$ under DSD priors with $q=1.5$ and $\pi_0=0.9$.}
	\label{fig:munich_vf}
\end{figure}

 \begin{figure}
	\centering
	\includegraphics[width=\linewidth]{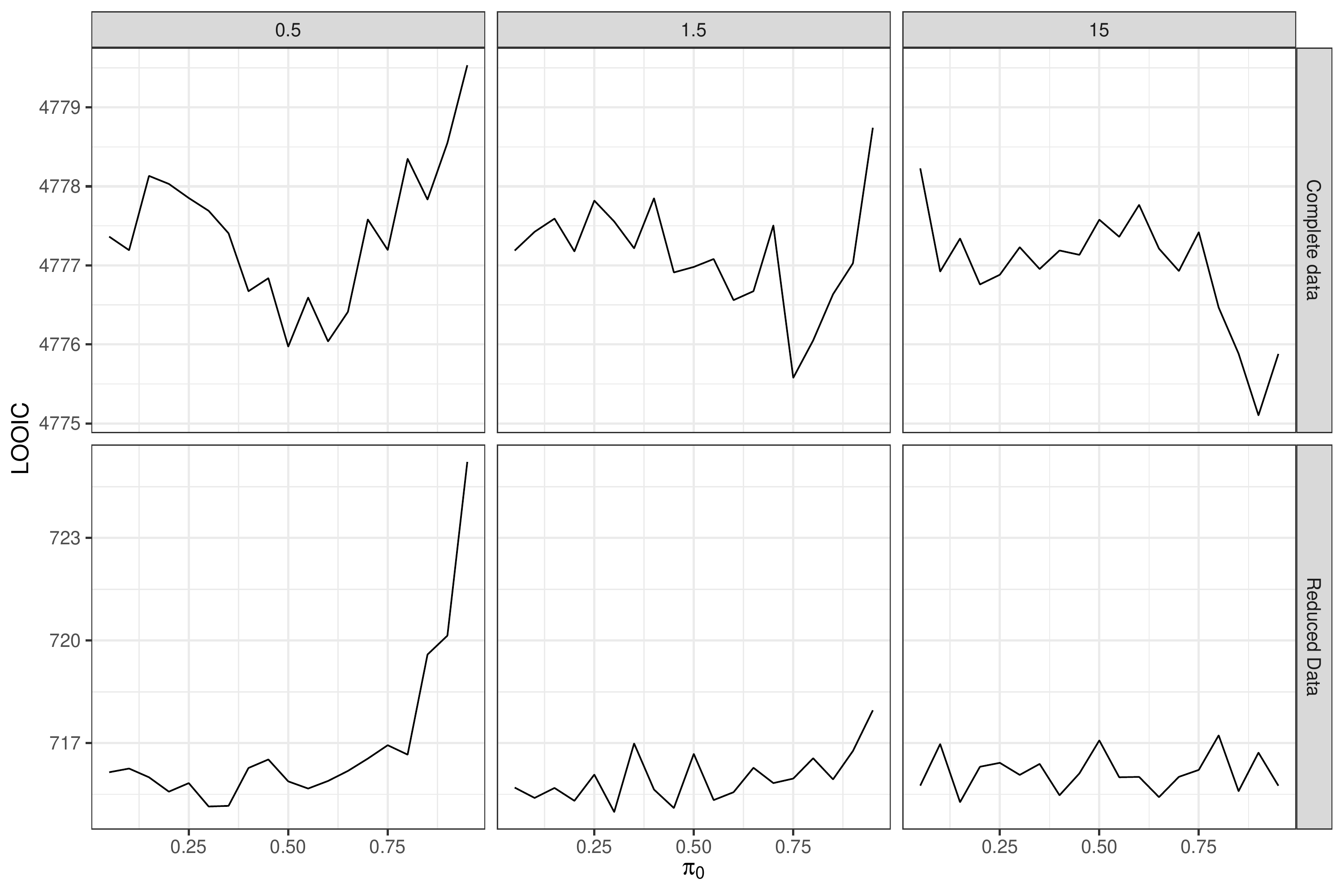}
	\caption{LOOIC related to model fitted on complete and reduced datasets (q=1.5).}
	\label{fig:munich_loos}
\end{figure}

\subsection{Tokyo Rainfall Data}

The dataset comprises $n=731$ daily dichotomous observations which equal 1 if more than 1 mm of rainfall was recorded in Tokyo during 1983 and 1984, and 0 otherwise. The aim of the application is to estimate the rainfall probability $p_t$ on a calendar day $t=1,\ldots,366$. The Binomial likelihood is
\[
y_{t}|p_t\sim\text{Bin}(p_t, n_{t}),\qquad t=1,\ldots,366,
\]
where $n_t = 2$ for $t\neq 60$ and $n_{60}=1$. The linear predictor is 
$
g(\mathbf{p})=\mathbf{1}\beta_0+\boldsymbol{\nu_{\text{day}}}
$
where
\[
\boldsymbol{\nu_{\text{day}}}=\mathbf{Z}\boldsymbol{\gamma}_{\text{day}};\quad\boldsymbol{\gamma}_{\text{day}}|\sigma^2_{\gamma_{\text{day}}}\sim\mathcal{N}\left(\mathbf{0},\sigma^2_{\gamma_{\text{day}}}\mathbf{K}_{\gamma_{\text{day}}}\right).
\]
In this case, the random effect design matrix is $\mathbf{Z}=\mathbf{I}_{366}$ and the structure 
$\mathbf{K}_{\gamma}$ corresponds to a circular RW of order two, to introduce conditional dependence between the first and last day of a year \citep{rue2005gaussian}. Note that the IGMRF order is $\kappa=1$, so the model needs only one linear constraint and no term needs to be added to the linear predictor.

\begin{figure}
    \centering
    \includegraphics[width=\linewidth]{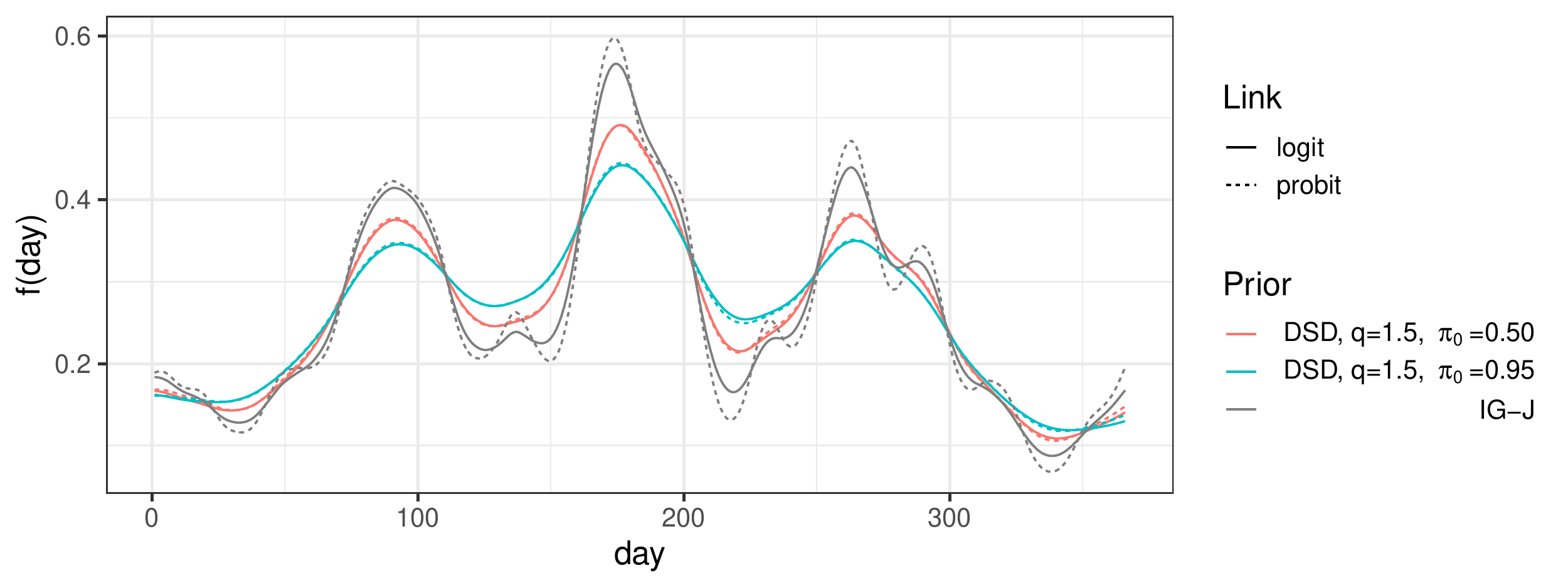}
    \caption{Fitted probabilities with respect to calendar days.}
    \label{fig:pred_tokyo}
\end{figure}

We estimate the model adopting both the logit and probit link. In the former case, we set $c$ in \eqref{eq:pi0} as the pseudo-variance
$c=\bar y^{-1}(1-\bar y)^{-1}=5.16$ (see Table 1 in \cite{piironen2017}), while $c=\bar y(1-\bar y)/\phi(\Phi^{-1}(\bar y))^2=1.82$ for the probit link. This leads to $b=26.5$ and $b=9.34$ for the logit and probit links respectively when $q=1.5$ and $p=0.5$ as suggested in Section \ref{sec:elicit} and $\pi_0=0.5$.

The merit of assigning different priors in the probit and the logit scale is highlighted in Figure \ref{fig:pred_tokyo}, where one can see that adopting the same IG-J prior independently on the link delivers a sensibly different smoothness of the fitted probabilities as a function of calendar day. On the other hand, DSD priors show limited variation with respect to the type of link function, being the smoothness of the predicted probabilities mainly affected by $\pi_0$. As expected, in this application, posterior inference shows a marked sensitivity to $\pi_0$, as shown in the sensitivity curves reported in the left panel of Figure \ref{fig:loo_tokyo}.

In the right panel of Figure \ref{fig:loo_tokyo}  the leave-one-out cross-validation information criterion \citep[LOOIC,][]{vehtari2017practical} is reported: it can be seen that LOOIC increases as the amount of shrinkage (determined by $\pi_0$) increases. We observed a similar behavior in the Munich rental data application (see Figure \ref{fig:munich_loos}) in the reduced dataset, suggesting that extreme values of $\pi_0$ can give rise to excessive smoothing resulting in worst model performances. On the basis of these considerations, we suggest to set $\pi_0=0.5$, completing the default setting sketched in Section \ref{sec:elicit}.

\begin{figure}
    \centering
    \includegraphics[width=.9\linewidth]{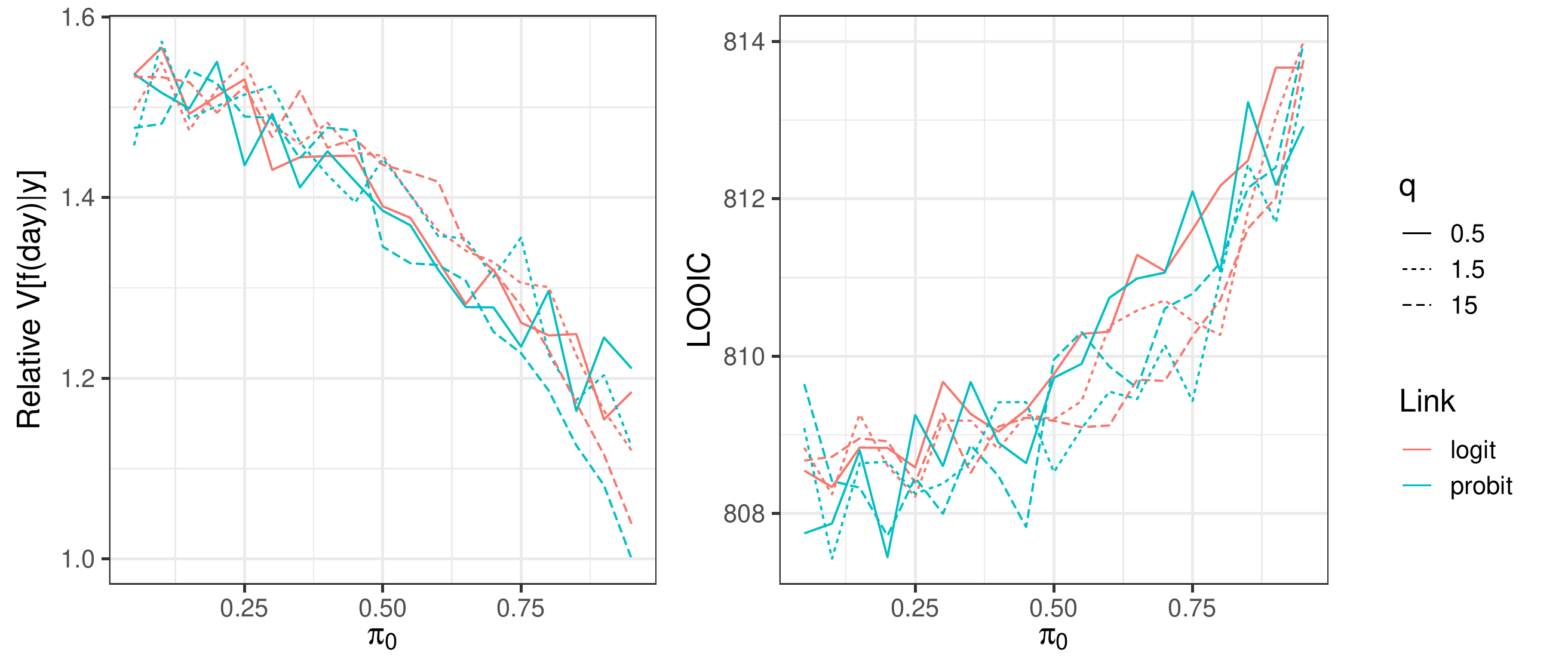}
    \caption{$\mathbb{V}[\mathbb{E}[f(j)|\mathbf{y}]]$ divided by the maximum as a function of $\pi_0$ (left panel). Leave One Out Information Criterion as a function of $\pi_0$ (right panel).}
    \label{fig:loo_tokyo}
\end{figure}

	\section{Concluding Remarks}\label{sec:conclusion}
	In this paper, a unified strategy for prior specification of scale parameters in Latent Gaussian Models has been proposed. This is a debated problem in the literature about Bayesian Hierarchical models and several interesting proposals have been developed to date. In our opinion, the lack of agreement about sensible default strategies for prior specification in this kind of models is due to three aspects that hinder clear-cut interpretation of the priors on parameters that act as mere scaler: design matrices, structure matrices and linear constraints. From an applied point of view, we believe that the most important phase of model specification concerns indeed these three components which contain: auxiliary information on the phenomenon under study (design), prior beliefs on the correlation structure/smoothness of model components (structure) and precautions needed for model identifiability or, possibly, for customizing parameters interpretability (linear constraints). Scale parameters are far less relevant in model specification, but they require careful prior specification in order to obtain a sensible allocation of prior variability to model components. As shown in Section \ref{sec:sdpri}, this can be done by taking into account the whole model architecture, comprising the link function. The $\pi_0$ parameter in DSD priors can be interpreted as a measure of plausibility of observed data under the model: for this reason, we find it sensible to set $\pi_0$ on the basis of the data variability when a Gaussian likelihood is concerned or on pseudo-variance in generalized linear models.
Concerning usability and easiness of application of LGMs, we think that DSD priors can be a useful tool for practitioners to be used as a safe default prior, allowing for straightforward sensitivity analysis in disparate situations comprising spatial, temporal, spatio-temporal, ANOVA, semi-parametric models and in general all models that can be cast coherently with the models summarised in Table \ref{tab:intro}.

Besides the practical aspects, another goal of the paper is to deepen the theory behind LGMs, trying to characterize conditional and marginal sampling variances of linear predictor components. These quantities provide a meaningful interpretation of scale parameters, useful in the prior elicitation step. These developments allow contextualizing previous contributions in the field like those by \citet{Sorbye2014} and \citet{Klein2016}. Indeed, the scaling procedure by the former can be seen as an attempt of removing the impact of the structure in the conditional expectation of the sampling variance, as deducible by results in Section \ref{sec:distr_V}. On the other hand, finding a prior that solves the integral equation in \eqref{eq:inteqth} is in line with the proposal of scale dependent priors by \citet{Klein2016}, where the prior on scale parameters is retrieved numerically, relying on a different synthesis of the effect dispersion and incorporating the design through arbitrary covariate patterns. In this sense, taking the sampling variance of the effect as focal quantity leads us to meaningful distributional results by resorting to the theory of QFs.

	\appendix
	\section{Basics of Mellin Transform}\label{sec:mellin}
The Mellin transform, strictly related to the more famous Laplace and Fourier transforms, is a mathematical tool largely used throughout the paper. 
In what follows, given a function $g(\cdot)$, its Mellin transform  \citep{paris2001asymptotics,poularikas2018transforms} is denoted as $\widehat{g}(\cdot)$. For a random variable $X$ having the positive real axis as support, the Mellin transform of its density function is defined as:
$
\widehat{f}_X(z)=\int_0^{+\infty} x^{z-1}f_X(x)\mathrm{d}x,
$
where $z=h+iy \in \mathbb{C}$.
Conversely, the density function $f_X(\cdot)$ can be recovered from $\widehat{f}_X(\cdot)$ by the inverse Mellin transform:
\begin{equation}\label{eq:mellin_inv}
	f_X(x)=\frac{1}{2\pi i}\int_{h-i\infty}^{h+i\infty} x^{-z}\widehat{f}_X(z)\mathrm{d}z,
\end{equation}
where $h=\Re(z)$ individuates a Bromwich path of integration included in the strip of analyticity of $\widehat{f}_X(\cdot)$. Once the strip of analyticity is identified, absolute convergence of both integrals is guaranteed.

The Mellin transform is a valuable tool in solving integral equations having form:
$$
f_X(x)=\int_0^{+\infty} y^{-1}f_{X|Y}(x/y)f_Y(y)\mathrm{d}y,
$$
since exploiting the results contained in \citep[Supplement~8]{polyanin2008handbook}, the Mellin transform of $f_X(x)$ can be written as the product:
\begin{equation}\label{eq:mellin_rel}
	\widehat{f}_X(z)=\widehat{f}_{X|Y}(z)\widehat{f}_Y(z).
\end{equation}

\section{Quadratic forms distribution}\label{app:QF}
Let us consider a $p$-dimensional random vector distributed as a multivariate normal:
\begin{equation}
	\mathbf{x}\sim\mathcal{N}_p\left(\boldsymbol{0},\boldsymbol{\Sigma}\right).
\end{equation}
The canonical expression of a quadratic form in $\mathbf{x}$ is:
\begin{equation}\label{eq:can_QF}
	Q(\mathbf{x})=\mathbf{x}^T\mathbf{A}\mathbf{x}
\end{equation}
where $\mathbf{A}\in\mathbb{R}^{p\times p}$ is a symmetric matrix.
Recalling the decomposition $\bm \Sigma=\bm L^T\bm L$, the distribution of a quadratic form can be expressed also as a weighted sum of $r$ central chi-squared distribution having $1$ degree of freedom and positive weights $\lambda_i$, $i=1,\dots,r$, that are the eigenvalues of the matrix $\bm L^T \bm{A L}$. The number of sum components $r$ is determined by the rank of $\bm L^T \bm{A L}$:
\begin{equation}\label{eq:c_lin_QF}
	Q(\mathbf{x})\sim\sum_{i=1}^{r}\lambda_i\chi^2_1,
\end{equation}
whose density function can be written as an infinite sum \citep{ruben1962probability}:
\begin{equation*}
	f_{Q}(q)=\sum_{k=0}^{\infty} c_k\frac{q^{r/2+k-1}\exp\left\{-\frac{v}{2\rho} \right\}}{(2\rho)^{r/2+k}\Gamma(r/2+k)},
\end{equation*}
where the coefficients $c_k$ are defined recursively as follows:
\begin{equation*}
	\begin{aligned}
		c_0&=\prod_{i=1}^{r} \left(\frac{\rho}{\lambda_i} \right)^{\frac{1}{2}};\quad
		c_k=(2k)^{-1} \sum_{l=0}^{k-1}g_{k-l}c_l,\quad k\geq 1;\\
		g_k&=k\sum_{i=1}^r \left(1-\frac{\rho}{\lambda_i}\right)^k,\quad k\geq 1;
	\end{aligned}
\end{equation*}
and $\rho$ is a constant chosen to improve the convergence of the infinite sum and respect the condition: $\left|1-\frac{\rho}{\lambda_i} \right|<1,\  \forall i$.

	\bibliographystyle{agsm}
	\bibliography{biblio}

\end{document}